\newtheorem{theorem}{Theorem}
\newtheorem{lemma}[theorem]{Lemma}
\newtheorem{proposition}[theorem]{Proposition}
\newtheorem{corollary}[theorem]{Corollary}
\newtheorem{fact}[theorem]{Fact}
\newtheorem{remark}[theorem]{Remark}
\newtheorem{claim}[theorem]{Claim}
\def\N{{\mathbb N}}
\def\C{{\mathbb C}}
\def\F{{\mathbb F}}
\newcommand{\ket}[1]{{|{#1}\rangle}}
\newcommand{\bra}[1]{{\langle{#1}|}}
\newcommand{\size}[1]{| #1 |}
\newcommand{\poly}{\mathrm{poly}}
\newcommand\Mat{{\mathcal{M}}}
\newcommand\GL{{\mathrm{GL}}}
\newcommand\AGL{{\mathrm{AGL}}}
\newcommand\Tr{{\mathrm{Tr}}}
\newcommand\tr{{\mathrm{tr}}}
\newcommand{\cH}{{\cal H}}
\newcommand{\cL}{{\cal L}}
\newcommand{\zerovec}{0}
\newcommand{\hsp}{\mathrm{HSP}}
\newcommand{\rhsp}{\mathrm{rHSP}}
\begin{document}

\title{An efficient quantum algorithm for
finding hidden parabolic subgroups in the general linear 
group\footnote{A preliminary version if this paper
appeared in \cite{dikqs-mfcs}.}.
}


\author{
Thomas Decker
\thanks{Centre for Quantum Technologies,
 National University of Singapore, Singapore 117543 {\tt t.d3ck3r@gmail.com}.}
\and
G\'abor Ivanyos
\thanks{Institute for Computer Science and Control, Hungarian 
Academy of Sciences, 
Budapest, Hungary 
({\tt Gabor.Ivanyos@sztaki.mta.hu}).} 
\and 
Raghav Kulkarni 
\thanks{Centre for Quantum Technologies,
 National University of Singapore, Singapore 117543 ({\tt 
 kulraghav@gmail.com}).}
\and
 Youming Qiao 
 \thanks{Centre for Quantum Computation and Intelligent Systems, 
 University of Technology, Sydney; and Centre for Quantum Technologies,
 National University of Singapore, Singapore 117543 
 ({\tt jimmyqiao86@gmail.com}).}
\and
 Miklos Santha 
\thanks{LIAFA, Univ. Paris 7, CNRS, 75205 Paris, France;  and 
Centre for Quantum Technologies, National University of Singapore, 
Singapore 117543 ({\tt miklos.santha@liafa.jussieu.fr}).}
}


\maketitle

\begin{abstract}
In the theory of algebraic groups, parabolic subgroups form a crucial building 
block in the structural studies. In the case of general linear groups over a 
finite field $\F_q$, given a 
sequence of positive integers $n_1, \dots, n_k$, where $n=n_1+\dots+n_k$, a 
parabolic subgroup of parameter $(n_1, \dots, n_k)$ in  
$\GL_n(\F_q)$ is a conjugate of the subgroup consisting of block lower 
triangular matrices where the $i$th block is of size $n_i$. Our main result is 
a quantum algorithm of time polynomial in $\log q$ and $n$ for
solving the hidden subgroup problem in 
$\GL_n(\F_q)$, when the hidden subgroup is promised to be a parabolic subgroup. 
Our algorithm works with no prior knowledge of the parameter of the hidden 
parabolic subgroup. Prior to this work, such an efficient quantum algorithm was 
only known for the case $n=2$ 
(A.~Denney, C.~Moore, and A.~Russell (2010),
Quantum Inf. Comput., Vol. 10, pp. 282-291)
and
for minimal parabolic subgroups (Borel subgroups), 
for the case when $q$ is not much smaller than $n$ (G. Ivanyos: 
Quantum Inf. Comput., Vol. 12, pp. 661-669).
\end{abstract}

\noindent {\bf Keywords:} Hidden subgroup; Quantum computing; Parabolic subgroups; General linear group.

\section{Introduction}
\label{sec:intro}

\subsection{Background}\label{subsec:background}

The hidden subgroup problem (HSP for short) is 
defined as follows. A 
function $f$ on a group $G$ is said to hide a subgroup 
$H\leq G$, if $f$ satisfies the following: $f(x)=f(y)$ if and 
only if $x$ and $y$ are in
the same left coset of $H$ (that is, $x^{-1}y\in { H}$). When such an 
$f$ is given as a black box, the HSP asks to determine the hidden subgroup 
$H$.  
Note that the problem when the
level sets of the hiding $f$ are demanded to be right cosets of $ H$ 
-- that is, $f(x)=f(y)$ if and only if $yx^{-1}\in { H}$ -- is equivalent:
composing $f$ with taking inverses maps a  hiding function via right cosets
to a hiding function via left cosets, and vice versa. When we explicitly want to refer
to this variant of the problem, we speak about HSP via right cosets.

The complexity of a hidden subgroup algorithm is measured
in terms of the number of bits representing the elements
of the group $G$, which is usually $O(\log |G|)$. On classical computers, 
the problem has exponential query complexity even for abelian groups. In 
contrast, the quantum query complexity of HSP for any group is polynomial 
\cite{ehk04}, and the HSP for abelian groups can be solved in polynomial time 
with a quantum computer \cite{bl95,kit95}. The latter algorithms are 
generalizations of Shor's result on order finding and computing discrete 
logarithms \cite{Shor}. These algorithms can be further generalized to compute 
the structure of finite commutative black-box groups \cite{cm01}. 

To go beyond the abelian groups is well-motivated by its connection with the 
graph isomorphism problem. Despite considerable attention, the groups for which 
the HSP is tractable remain close to being abelian. For example, we know 
polynomial-time algorithms for the following cases: groups whose derived 
subgroups are of 
constant derived length and constant exponent \cite{FIMSS-stoc}, Heisenberg 
groups \cite{bcvd05,bacon08} and
more generally two-step 
nilpotent groups \cite{iss08},
 ``almost Hamiltonian'' groups 
\cite{gavinsky04}, and groups with a large abelian subgroup and reducible to 
the abelian case \cite{ilg07}. The limited success in going beyond the abelian 
case indicates that the nonabelian HSP may be hard, and \cite{Regev} shows some 
evidence for this by providing a connection between the HSP in dihedral groups 
and some supposedly difficult lattice problem. 

Instead of considering various ambient groups, another direction is to pose 
restrictions on the possible hidden subgroups. This can result in efficient 
algorithms, even over fairly nonabelian ambient groups. For example, if the 
hidden subgroup is assumed to be normal, then HSP can be solved in quantum 
polynomial time in groups for which there are efficient quantum Fourier transforms 
\cite{gsvv01,hrt03}, and even in a large class of groups,
including solvable groups \cite{ims03}. The methods 
of \cite{mrrs04,gpc09} are able to find sufficiently large non-normal 
hidden subgroups in certain semidirect products efficiently. 

Some restricted subgroups of the general linear groups were also considered in 
this context. The result by Denney, Moore and Russell in \cite{dmr10} is 
an efficient quantum algorithm that solves the HSP in the group of 2 
by 2 invertible matrices (and related groups) where the hidden subgroup is 
promised to be a so-called Borel subgroup. 
In \cite{iva12}, Ivanyos considered finding Borel subgroups in general linear 
groups of 
higher degree, and presented an efficient algorithm when the size of the 
underlying field is not much smaller than the degree. 



A well-known superclass of the family of Borel subgroups is the family of 
parabolic subgroups, whose definition is given below.
In this work, we follow the line of research in 
\cite{dmr10,iva12}, and consider the problem of 
finding parabolic subgroups in general linear groups. Our main result will be a 
polynomial-time quantum algorithm for this case, without restrictions on field 
size. 


\subsection{Parabolic subgroups of the general linear 
group}\label{subsec:def}

Let $q$ be a power of a prime $p$. The field with $q$ elements is 
denoted by 
$\F_q$. The vector space $\F_q^n$ consists of  
column vectors of length $n$ over $\F_q$. 
$\GL_n(\F_q)$ stands for the general linear group 
of degree $n$ over $\F_q$. 
The elements of
$\GL_n(\F_q)$ are the invertible $n\times n$ 
matrices with entries from $\F_q$. 
We also use $\GL(V)$ to denote the group of linear automorphisms of the
$\F_q$-space $V$. With this notation, 
we have $\GL_n(\F_q)\cong \GL(\F_q^n)$ and throughout
the paper we will identify these two groups. 
As a matrix
is represented by an array of $n^2$ elements from
$\F_q$, an algorithm is considered efficient if
its complexity is polynomial in $n$ and $\log q$.

We now present the definition of parabolic subgroups (see~\cite{Springer}). 
For a positive integer $k$, and a sequence of positive integers $n_1, \dots, 
n_k$ with $n_1+\dots +n_k=n$, 
the {\em standard parabolic subgroup} of $\GL_n(\F_q)$ with parameter $(n_1, 
\dots, n_k)$ is the subgroup 
consisting of the invertible lower
block triangular matrices of diagonal block sizes
$n_1,\ldots,n_k$. Any conjugate of the standard parabolic subgroup is called a 
{\em parabolic subgroup}. 

To see the geometric meaning of parabolic subgroups, we review the concept of 
flags of vector spaces. Let $\zerovec$ also denote the zero vector 
space. For $\F_q^n$ and $k\geq 1$, a flag ${ F}$ with the parameter 
$(n_1, \dots, 
n_k)$ is a nested sequence of subspaces of $\F_q^n$, that is $\F_q^n=U_0 > U_1 
> U_2 > \dots > U_{k-1}> U_k = \zerovec$, such that for $0\leq i\leq k-1$, 
$\dim(U_i)=n_{i+1}+\dots + n_k$.  $k$ is called the length of $F$.
For $g\in \GL_n(\F_q)$, $g$ stabilizes the flag $F$ if for every $i\in[k]$, 
$g(U_i)=U_i$. 
Then all group elements in 
$\GL_n(\F_q)$ stabilizing ${ F}$ form a parabolic subgroup. On the other hand, 
any parabolic subgroup corresponds to some flag $F$, namely it consists of the 
elements in $\GL_n(\F_q)$ stabilizing $F$. 

For example, the standard parabolic subgroup $B$ in $\GL_5(\F_q)$ 
with 
parameter $(2, 2, 1)$ consists of invertible matrices of the form 
{\tiny
$
\begin{pmatrix}
* & * & 0 & 0 & 0  \\
* & * & 0 & 0 & 0 \\
* & * & * & * & 0 \\
* & * & * & * & 0 \\
* & * & * & * & * 
\end{pmatrix}
$}.
Let $\{e_1, \dots, e_5\}$ be the standard basis of $\F_q^5$. The flag 
stabilized by $B$ is $\F_q^5 > \langle e_3, e_4, e_5\rangle > \langle 
e_5\rangle > \zerovec$. 

A parabolic subgroup is maximal if there are no parabolic subgroups properly 
containing it. It is minimal if it does not properly contain any parabolic 
subgroup. A parabolic subgroup $B$ in $\GL_n(\F_q)$ is 
maximal if and only if it is the stabilizer of a flag of length $2$, that is, 
it is the 
stabilizer of some nontrivial subspace. On the other hand, $B$ is minimal 
if it stabilizes a flag of length $n$. {\em Borel subgroups} in $\GL_n(\F_q)$ 
are just minimal parabolic subgroups.
They are conjugates of the subgroup of invertible lower triangular
matrices.




\subsection{Our results}\label{subsec:results}

The main result of this paper is a polynomial-time 
quantum algorithm for 
finding parabolic subgroups in general linear groups. 

\begin{theorem}\label{thm:main}
Any hidden parabolic subgroup in $\GL_n(\F_q)$ can be found in quantum 
polynomial time (i.e., in time $\poly(\log q,n)$).
\end{theorem}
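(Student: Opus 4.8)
The plan is to peel the hidden flag apart one subspace at a time, reducing everything to a single quantum subroutine that locates the largest proper subspace of a hidden flag; making that subroutine work with no restriction on $q$ is the crux.

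\emph{Reduction to finding one subspace.} Write the hidden parabolic as $P=\mathrm{Stab}(\mathcal F)$ for an unknown flag $\mathcal F\colon V=U_0>U_1>\dots>U_k=\zerovec$. A standard computation with root subgroups shows that the $P$-invariant subspaces of $V$ are exactly the members of $\mathcal F$; in particular $U_1$ is intrinsically the largest proper $P$-invariant subspace, and I would first recover it by the quantum subroutine below. Once $U_1$ is known, $P_1:=\mathrm{Stab}(U_1)$ is an explicitly known maximal parabolic, and its unipotent radical $N_1=\{g\in P_1 : g\text{ acts as the identity on }U_1\text{ and on }V/U_1\}$ --- an abelian group isomorphic to $\mathrm{Hom}(V/U_1,U_1)$ --- is contained in $P$ (an element of $N_1$ fixes $U_1$ pointwise, hence fixes $U_2,\dots,U_{k-1}\subseteq U_1$) and is normal in $P_1$. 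Restricting the hiding function to $P_1$ and then applying the standard reduction modulo the known normal subgroup $N_1$ yields an HSP in $P_1/N_1\cong\GL(U_1)\times\GL(V/U_1)$ whose hidden subgroup is $Q\times\GL(V/U_1)$, where $Q=\mathrm{Stab}_{\GL(U_1)}(U_1>U_2>\dots>U_{k-1}>\zerovec)$ is a parabolic subgroup of $\GL(U_1)$; restricting once more to the first direct factor leaves precisely the original problem for $\GL(U_1)$, a general linear group of strictly smaller degree. Iterating at most $n$ times recovers the whole flag, hence $P$.

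\emph{The core subroutine.} It remains to solve the following, which is where quantum power is used: given a function hiding a parabolic $P=\mathrm{Stab}(\mathcal F)$ in $\GL_m(\F_q)$ with $\mathcal F$ of length at least $2$, recover the largest proper subspace $U_1$ of $\mathcal F$. When $P$ is maximal this is the hidden subgroup problem for the stabilizer of a point of a Grassmannian; for $m=2$ it is the result of \cite{dmr10}, and for Borel subgroups \cite{iva12} gives the corresponding (iterated) algorithm, but only when $q$ is not much smaller than $n$. Note that the coset space $\GL_m(\F_q)/P$ has exponentially many points in $\log q$, so one cannot merely enumerate it from measured coset states; the approach, as in \cite{dmr10,iva12}, is to use the field structure to reduce the problem to hidden subgroup problems in groups of affine type, roughly $\F_q^{r}\rtimes(\text{a torus or a small linear group})$, which can be solved by near-abelian Fourier sampling over $\F_q$. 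The relevant permutation modules are multiplicity-free --- the $\F_q$-Grassmannians furnish $q$-analogues of the Johnson association scheme --- which is what makes such a reduction conceivable.

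\emph{Expected main obstacle.} The decisive difficulty is to perform this reduction with \emph{no} condition on $q$. In \cite{iva12} a general-position/sampling step forces $q$ to be comparable to $n$; the technical heart of the present work must be a field-size-independent replacement, while still certifying that the data it extracts pins down $U_1$ itself and not merely some coarser invariant of $P$. A secondary point is the interface between the two stages: the core subroutine must behave correctly when it is handed a hiding function for a parabolic that is \emph{strictly} contained in $\mathrm{Stab}(U_1)$, so that the degree recursion is well-founded.
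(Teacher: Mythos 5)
There is a genuine gap: the entire quantum content of the theorem is missing from your proposal. Your first stage (recover one member of the flag, restrict to its stabilizer, pass to the quotient by the unipotent radical, recurse on a general linear group of smaller degree) is sound and is essentially the recursion the paper also uses --- the paper recurses bottom-up on the \emph{smallest} flag member $U_{k-1}$ rather than top-down on $U_1$, but Remark~\ref{remark:dual} notes the dual scheme works equally well, so that difference is cosmetic. What you have not supplied is the ``core subroutine'' that actually finds a flag member with no restriction on $q$; you explicitly label this the ``decisive difficulty'' and leave it as an expected obstacle rather than resolving it. A reduction to an unproven subroutine, where that subroutine is precisely the new contribution, is not a proof. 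Moreover, the direction you gesture at (multiplicity-free permutation modules, $q$-analogues of the Johnson scheme, reduction to affine-type groups solved by Fourier sampling over $\F_q$) is the route that in \cite{iva12} forces $q$ to be large, and you give no replacement for it.

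For comparison, the paper removes the field-size condition by two different mechanisms. First (Proposition~\ref{prop:max}), a \emph{maximal} parabolic $G_{\{U\}}$ is found for arbitrary $q$ by observing that a coset superposition $\ket{AH}$ occupies a constant fraction ($>1/16$) of the coset $AW$ of the matrix subspace $W=\{X\in\Mat_n(\F_q):XU\leq U\}$, so the $\qft$ of the matrix space $\Mat_n(\F_q)$ with respect to the trace form samples $W^\perp$ and hence reveals $U$ --- no representation theory of $\GL_n$ and no condition on $q$. Second (Proposition~\ref{prop:guess_part}), for a general parabolic one intersects $H$ with the pointwise stabilizer of a random hyperplane $U'$; the unipotent part $N$ of that stabilizer is abelian, isomorphic to $\F_q^{n-1}$, so $H\cap N$ is found by the abelian HSP algorithm and yields $U'\cap U_{k-1}$. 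The case $\dim U_{k-1}=1$ (where $H\cap N$ may be trivial) is split by field size: for $q\geq n$ the affine-type tool of Proposition~\ref{prop:tool} applies, and for $q<n$ one conditions on the probability-$\Omega(1/q)=\Omega(1/n)$ event that $U'\supseteq U_{k-1}$ but $U'\not\supseteq U_{k-2}$, recovers $U'\cap U_{k-2}$ abelianly, and finishes with the maximal-parabolic algorithm inside $\GL(U'\cap U_{k-2})$. This three-way case analysis, together with the verification procedure of Lemma~\ref{prop:check}, is exactly the field-size-independent replacement you correctly predicted must exist but did not construct.
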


Note that this algorithm does not require one to know the parameter of the 
hidden
parabolic subgroup in advance. Neither does it pose any restriction on the 
underlying field 
size, while the algorithm in \cite{iva12} for finding Borel subgroups 
requires the field size to be large enough. 
The basic idea behind the algorithm is that in certain cases
the superposition of the elements in a coset of the subgroup
is close to a superposition of the elements of a linear space
of matrices. The latter perspective allows the use of standard algorithms for 
abelian HSPs. Another crucial idea is to make use of the subgroup of common 
stabilizers of all the vectors on a random hyperplane, and examine its 
intersection with the hidden parabolic subgroup. 

We can also consider certain subgroups of 
Borel subgroups, namely the {\em full 
unipotent subgroups}. They are conjugates of the subgroup of lower triangular 
matrices with $1$'s on the diagonal. Following a variant of the idea for 
Theorem~\ref{thm:main}, we can show that these subgroups can be efficiently
found if the base field is small.

\begin{theorem}\label{thm:main2}
Any hidden full unipotent subgroup in $\GL_n(\F_q)$ can be found by a
quantum algorithm in time $\poly(q,n)$.
\end{theorem}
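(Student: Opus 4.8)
The plan is to exploit a feature that distinguishes a full unipotent subgroup from a general parabolic subgroup: it is \emph{exactly} an affine subspace of the matrix space, with no ``close to'' approximation needed. Writing the hidden subgroup $U \le \GL_n(\F_q)$ as a conjugate of the lower unitriangular group, we have $U = I + W$ for a unique $\binom{n}{2}$-dimensional linear subspace $W \le \Mat_n(\F_q)$; every element of $U$ is invertible. Moreover $W$ determines and is determined by the complete flag $\F_q^n = \mathcal{F}_0 > \mathcal{F}_1 > \dots > \mathcal{F}_n = \zerovec$ stabilized by $U$, via $W = \{M \in \Mat_n(\F_q) : M\mathcal{F}_j \subseteq \mathcal{F}_{j+1}\ \text{for all}\ j\}$; and from the flag one obtains $U$ explicitly, as the set of $g \in \GL_n(\F_q)$ that fix every $\mathcal{F}_j$ and act as the identity on every quotient $\mathcal{F}_j/\mathcal{F}_{j+1}$. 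So it is enough to recover the flag, and I would recover it one subspace at a time, starting from the bottom line $\mathcal{F}_{n-1}$ and moving up.

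The engine is a family of abelian subgroups on which $f$ restricts to an abelian hiding function. For a nonzero $\phi \in (\F_q^n)^\ast$ with $V := \ker\phi$, put $A_\phi := \{I + v\phi^{T} : v \in V\}$. Since $v\phi^{T}$ is square-zero whenever $v \in V$, the map $v \mapsto I + v\phi^{T}$ is a group isomorphism from $(V,+) \cong \F_q^{n-1}$ onto $A_\phi$, so $A_\phi$ is an abelian subgroup of $\GL_n(\F_q)$. As $A_\phi$ is a subgroup, the restriction of $f$ to $A_\phi$ (through this parametrization) is an honest hiding function for $A_\phi \cap U$ inside $A_\phi$; a short computation, conjugating everything to the standard form, identifies $A_\phi \cap U$ with the subspace $\{v \in V : v\phi^{T} \in W\} = \mathcal{F}_{j(\phi)}$, where $j(\phi) := \max\{j : \mathcal{F}_j \subseteq V\}$. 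Hence one call to the standard abelian HSP algorithm over $\F_q^{n-1}$, costing $\poly(n \log q)$, returns the flag member $\mathcal{F}_{j(\phi)}$ as an explicit subspace of $\F_q^n$.

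What remains is to choose the functionals so that, over polynomially many (in $q$ and $n$) trials and with no advance knowledge of $U$, every flag member is obtained. I would do this incrementally. Suppose the bottom segment $\zerovec = \mathcal{F}_n \subset \mathcal{F}_{n-1} \subset \dots \subset \mathcal{F}_{n-k}$ is already known. Sample $\phi$ uniformly from $\mathcal{F}_{n-k}^{\perp} \setminus \{0\}$, so that $\mathcal{F}_{n-k} \subseteq V$ and hence $j(\phi) \le n-k$, and run the abelian HSP call of the previous paragraph. With probability $\Omega(1/q)$ one has $j(\phi) = n-k-1$, and then the returned subspace has dimension $\dim\mathcal{F}_{n-k} + 1$ and must equal $\mathcal{F}_{n-k-1}$, extending the segment by one; if the returned subspace has any other dimension, discard it and resample. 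After $O(q \log n)$ resamples per extension (to succeed with high probability) and $n-1$ extensions, the whole flag, and therefore $U$, is determined, the total running time being $\poly(q,n)$ -- which is where the small-field hypothesis enters. Along the way one also checks the routine points: that the restriction of $f$ to $A_\phi$ is genuinely a hiding function, that the additive subgroup output by the abelian solver is read off as the intended $\F_q$-subspace of $\F_q^n$, and that the rare degenerate positions of $V$ relative to the flag cause no trouble.

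The crux is the combination of the last two paragraphs: exhibiting a family of abelian subgroups of $\GL_n(\F_q)$ whose intersections with $U$ simultaneously encode the flag and are extractable by the abelian HSP algorithm, and then arranging that polynomially many such subgroups -- sampled with no prior knowledge of $U$ -- suffice to reconstruct the entire flag. The affine-subspace viewpoint is what makes the subgroups $A_\phi$ available in the first place; setting up the annihilator-guided incremental sampling is the real work, and the dependence on $q$ rather than $\log q$ is intrinsic to that step.
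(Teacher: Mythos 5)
Your argument is correct and is essentially the paper's: your $A_\phi$ is exactly the paper's abelian subgroup $N$ attached to the hyperplane $W'=\ker\phi$, the $\Omega(1/q)$ success probability per flag step is the same hyperplane-versus-flag computation, and the extraction of $A_\phi\cap H$ by the abelian HSP algorithm is identical; the only (cosmetic) difference is that you build the flag bottom-up inside the full group by sampling $\phi$ from the annihilator of the part already found, whereas the paper recurses into the pointwise stabilizer of $U_{n-1}$. One slip to fix: $j(\phi)$ should be the \emph{smallest} index $j$ with $\mathcal{F}_j\subseteq\ker\phi$, so that $\mathcal{F}_{j(\phi)}$ is the largest flag member annihilated by $\phi$ --- as written, the maximum is always $n$ and would yield the zero subspace.
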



Finally, 
we consider finding the maximal parabolic subgroups in the classical setting. 
We show that in the classical setting,  the 
deterministic and randomized query complexities are exponential, in contrast to 
the efficient quantum algorithm as above.

\begin{theorem}\label{thm:main3} 
For $d\leq n/2$ the query complexity for a bounded-error randomized algorithm 
with bounded error probability 
$\epsilon$ 
to find a maximal parabolic subgroup stabilizing a $d$-dimensional 
subspace in $\GL_n(\F_q)$ is $\Omega(q^{d/2})$. 
\end{theorem}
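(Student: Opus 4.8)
\section*{Proof proposal for Theorem~\ref{thm:main3}}

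The plan is to prove the lower bound by Yao's principle together with a hybrid (adversary) argument of the ``birthday'' type, which is exactly what produces a square root in the exponent. Fix the input distribution in which the hidden subgroup is $H_W$, the stabilizer of a uniformly random $d$-dimensional subspace $W\leq\F_q^n$; by Yao's principle it suffices to show that every deterministic algorithm $\mathcal{A}$ making $t$ evaluations of the hiding function errs with probability more than $\epsilon$ on this distribution unless $t=\Omega(q^{d/2})$. The crucial observation is that a $t$-query algorithm only ever learns the partition of its query points $x_1,\ldots,x_t$ into level sets of the hiding function, i.e.\ the equivalence relation $x_i\sim x_j\iff x_i^{-1}x_j\in H_W$. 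Write $Z$ for the center of $\GL_n(\F_q)$ (the scalar matrices); since every maximal parabolic subgroup contains $Z$, one should compare the oracle $f_W$ hiding $H_W$ with the ``trivial'' oracle $f_0$ hiding $Z$. Because $Z\subseteq H_W$, the two oracles induce the same partition of any set $\{x_1,\ldots,x_t\}$ unless some pair satisfies $x_i^{-1}x_j\in H_W\setminus Z$.

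To make this precise, run the query procedure of $\mathcal{A}$ against $f_0$; since $f_0$ does not depend on $W$, this produces a fixed query sequence $x_1,\ldots,x_t$ and a fixed output, which we may take to be the stabilizer of a fixed subspace $W_0$. A straightforward induction on the query index shows that the run of $\mathcal{A}$ against $f_W$ produces the identical transcript, and hence the identical output $W_0$, whenever $x_i^{-1}x_j\notin H_W\setminus Z$ for all $i<j$. Consequently
\[
\prob_W[\mathcal{A}\text{ succeeds}]\ \leq\ \prob_W[W=W_0]\ +\ \sum_{i<j}\prob_W\bigl[x_i^{-1}x_j\in H_W\setminus Z\bigr]\ \leq\ \frac{1}{N_d}\ +\ \binom{t}{2}\,p ,
\]
where $N_d=\binom{n}{d}_q$ is the number of $d$-dimensional subspaces of $\F_q^n$ and $p=\max\{\prob_W[gW=W]:g\in\GL_n(\F_q)\setminus Z\}$ (for a pair with $x_i^{-1}x_j\in Z$ the corresponding summand is $0$, and for a pair with $x_i^{-1}x_j\notin Z$ it is at most $p$). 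Since $N_d\geq q^{d(n-d)}$ is astronomically large and $\epsilon$ is bounded away from $1$, forcing this probability to be at least $1-\epsilon$ requires $\binom{t}{2}p=\Omega(1)$, i.e.\ $t=\Omega(1/\sqrt{p})$.

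It remains to prove the combinatorial heart of the argument: for every non-scalar $g\in\GL_n(\F_q)$ and every $1\leq d\leq n/2$,
\[
\#\{\,W\leq\F_q^n : \dim W=d,\ gW=W\,\}\ =\ O\!\left(q^{-d}\right)\binom{n}{d}_q ,
\]
so that $p=O(q^{-d})$ and $t=\Omega(q^{d/2})$. I would prove this through the rational canonical form of $g$: writing $\F_q^n=\bigoplus_i V_i$ for the primary decomposition of $g$, every $g$-invariant subspace $W$ splits as $W=\bigoplus_i(W\cap V_i)$, so the count factors as a convolution of the numbers of $g$-invariant subspaces inside the individual components $V_i$; a $q$-Vandermonde-type estimate then bounds it by $\binom{n}{d}_q$, and the fact that $g$ is non-scalar forces at least one of these components to be proper, which yields the extra saving of a factor $q^{-d}$. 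The extremal case, which simultaneously shows the bound is tight and explains the hypothesis $d\leq n/2$, is $g=\mathrm{diag}(\lambda,\ldots,\lambda,\mu)$ with $\mu\neq\lambda$ (morally, the non-central element closest to the center, a scalar plus a rank-one matrix): here the $g$-invariant $d$-subspaces are exactly those contained in the $(n-1)$-dimensional eigenspace together with those containing the complementary line, so there are $\binom{n-1}{d}_q+\binom{n-1}{d-1}_q\leq 2q^{-d}\binom{n}{d}_q$ of them, using $d\leq n/2$ to make the first term dominate. The main obstacle is making the estimate $O(q^{-d})$ uniform over all non-scalar $g$ and all primary-decomposition types --- in particular, bounding the number of invariant submodules of a primary component with several Jordan blocks, and handling the combinatorics of distributing $d$ among the components --- rather than merely verifying the extremal family; the hypothesis $d\leq n/2$ is exactly where one uses that $q^{-d}\geq q^{-(n-d)}$, and for $d>n/2$ the analogous argument gives the dual bound $\Omega(q^{(n-d)/2})$, the problem for parameter $d$ being equivalent to that for parameter $n-d$.
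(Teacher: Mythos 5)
Your overall framework is sound and is essentially the one the paper uses: reduce to deterministic algorithms by Yao's principle, observe that a $t$-query algorithm can only distinguish hidden subspaces through the $\binom{t}{2}$ quotients $x_i^{-1}x_j$ of its query points, and conclude $t=\Omega(1/\sqrt{p})$ where $p$ bounds the fraction of $d$-dimensional subspaces stabilized by a fixed non-scalar matrix. Your target bound $p=O(q^{-d})$ is exactly equivalent to the paper's key claim that a non-scalar $A$ stabilizes at most ${\binom{n-1}{d}}_q+{\binom{n-1}{d-1}}_q$ of the ${\binom{n}{d}}_q$ subspaces (via the $q$-Pascal identity and ${\binom{n-1}{d-1}}_q\le{\binom{n-1}{d}}_q$ for $d\le n/2$), and your extremal example $\mathrm{diag}(\lambda,\ldots,\lambda,\mu)$ is the same one that saturates that bound. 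Your hybrid against the oracle hiding the center $Z$ is a clean way to package the adversary argument.

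The gap is that this combinatorial bound --- which you yourself call the heart of the argument --- is not proved. You verify only the extremal family and explicitly defer the uniform estimate over all non-scalar $g$ (several primary components, several invariant submodules within a component, the convolution over ways of distributing $d$ among them) as ``the main obstacle.'' That uniformity is precisely where the content lies: for instance, for a unipotent $g=I+N$ with $N$ of rank one, no invariant subspace arises from a decomposition into distinct primary components, and one must argue separately that every invariant $d$-space not contained in $\ker N$ contains the image of $N$, giving at most ${\binom{n-1}{d-1}}_q$ such spaces. The paper closes the lemma not by the rational canonical form but by induction on $n$, splitting on whether $g$ acts as a scalar on some hyperplane; Case I recurses over hyperplanes, and Case II reduces to exactly the two extremal shapes you identified ($\mathrm{diag}(\lambda,\ldots,\lambda,\mu)$ and the scalar-plus-nilpotent-of-rank-one matrix). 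If you want to pursue your module-theoretic route you must actually carry out that count uniformly, which is substantially more machinery than the inductive argument needs. A minor secondary point: in the distributional setting you should also justify why the algorithm cannot exploit the actual label values rather than only the collision pattern (randomize the labels, or phrase the argument, as the paper does, with an adversary assigning labels online).
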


The proof is based on the fact that for any $o(q^{d/2})$ matrices
which are not scalar multiples of each other there are still many 
$d$-dimensional subspaces such that the matrices fall into pairwise
distinct cosets of the corresponding maximal parabolic subgroups.
As every Borel subgroup is contained in a unique maximal parabolic
subgroup stabilizing an $n/2$-dimensional subspace, the same
argument gives the following.

\begin{corollary}\label{cor:main4}
The query complexity for a bounded-error randomized algorithm 
with bounded error probability 
$\epsilon$ 
to find a hidden Borel subgroup in $\GL_n(\F_q)$ is $\Omega(q^{\lfloor n/4\rfloor})$. 
\end{corollary}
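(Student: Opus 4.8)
The plan is to derive Corollary~\ref{cor:main4} directly from Theorem~\ref{thm:main3} together with a structural fact about the containment of Borel subgroups in maximal parabolic subgroups. First I would recall the key group-theoretic observation already flagged in the text: every Borel subgroup $B$ of $\GL_n(\F_q)$ is the stabilizer of a complete flag $\F_q^n = U_0 > U_1 > \dots > U_n = \zerovec$, and among the subspaces $U_i$ appearing in this flag there is exactly one of dimension $d := \lfloor n/2 \rfloor$. Consequently $B$ is contained in a unique maximal parabolic subgroup $P_B$, namely the stabilizer of that single $d$-dimensional subspace $U_{n-d}$. Crucially, the map $B \mapsto U_{n-d}$ is the ``read off the $d$-dimensional layer'' map, and distinct $d$-dimensional subspaces give distinct maximal parabolics $P_B$.

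The reduction then goes as follows. Suppose we had a bounded-error randomized algorithm $\mathcal{A}$ that finds a hidden Borel subgroup using $o(q^{\lfloor n/4 \rfloor})$ queries. I would use it to build an algorithm for the maximal-parabolic problem of Theorem~\ref{thm:main3} with $d = \lfloor n/2 \rfloor$ (note $\lfloor d/2 \rfloor = \lfloor \lfloor n/2\rfloor / 2 \rfloor \ge \lfloor n/4 \rfloor$ up to the obvious rounding, so $q^{d/2}$ and $q^{\lfloor n/4\rfloor}$ match in exponent). Given oracle access to a function $f$ hiding a maximal parabolic $P$ that stabilizes some $d$-dimensional subspace $W$, I would pick any complete flag refining the two-step flag $\F_q^n > W > \zerovec$; the stabilizer of that complete flag is a Borel subgroup $B_0 \le P$. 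Since $f$ hides $P$ via left cosets and $B_0 \le P$, the composed picture is not immediately a hiding function for $B_0$ — but one does not need that: the point is to run $\mathcal{A}$ only as a lower-bound gadget. Concretely, I would instead argue in the contrapositive at the level of query complexity: any algorithm solving the Borel-HSP also distinguishes, via the induced $d$-dimensional layer, the maximal parabolics from Theorem~\ref{thm:main3}, because from the returned Borel subgroup one reads off its unique $d$-dimensional layer and hence the corresponding maximal parabolic; so a $o(q^{\lfloor n/4\rfloor})$-query Borel algorithm would yield a $o(q^{d/2})$-query maximal-parabolic algorithm, contradicting Theorem~\ref{thm:main3}.

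To make the oracle translation precise I would observe that a function $f$ hiding a Borel subgroup $B$ (via right cosets, say) can be post-composed with the surjection sending a coset representative's action to the induced action on the $d$-dimensional layer; since all elements in a coset of $B$ act identically on that layer, and since two Borel subgroups with different $d$-dimensional layers are never confused by such $f$, a single query to the Borel-hiding oracle can be simulated from a query to the maximal-parabolic-hiding oracle after fixing a compatible complete flag. Thus the information-theoretic lower bound $\Omega(q^{d/2}) = \Omega(q^{\lfloor n/2\rfloor / 2})$ for the maximal-parabolic case transfers to the same bound $\Omega(q^{\lfloor n/4\rfloor})$ for the Borel case.

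The main obstacle I anticipate is the exact bookkeeping of the oracle simulation direction: the Borel-hiding function $f$ carries \emph{more} information than the maximal-parabolic-hiding function (it distinguishes cosets of a smaller subgroup), so one cannot simulate a Borel oracle from a maximal-parabolic oracle for free. The clean way around this, which I would adopt, is not to simulate oracles at all but to argue directly at the level of the adversary/hard distribution used to prove Theorem~\ref{thm:main3}: take the same collection of $d$-dimensional subspaces and the same matrices landing in pairwise distinct cosets of the associated maximal parabolics, refine each two-step flag to a complete flag, and note that the matrices then also land in pairwise distinct cosets of the associated Borel subgroups (since distinct maximal-parabolic cosets force distinct Borel cosets). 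The indistinguishability argument of Theorem~\ref{thm:main3} applies verbatim to this Borel instance, yielding the stated $\Omega(q^{\lfloor n/4\rfloor})$ bound.
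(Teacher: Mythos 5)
Your final argument --- reusing the adversary/counting proof of Theorem~\ref{thm:main3} with $d=\lfloor n/2\rfloor$ and observing that a quotient $g_ig_j^{-1}$ lying outside the maximal parabolic stabilizing the $d$-dimensional layer of a complete flag also lies outside the corresponding Borel subgroup --- is exactly the paper's proof, which likewise notes that Claim~\ref{claim:queriesneeded} remains valid verbatim for the class of Borel subgroups. You were right to abandon the oracle-simulation reduction sketched in your opening paragraphs (for the reason you identify: the Borel oracle is more informative), and the argument you settle on is correct.
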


\noindent{\it The structure of the paper.} In Section~\ref{sec:prel} we collect 
certain preliminaries for the paper. In particular, in Section~\ref{sec:QFT}
 we adapt the 
standard algorithm for abelian HSP to linear subspaces, which forms the basis 
of our algorithms. We then present an efficient quantum algorithm for finding  
maximal parabolic subgroups in Section~\ref{subsec:stab_space}. 
Section~\ref{sec:tool} describes a main technical tool, a 
generalization of the result of \cite{mrrs04,dmr10} for finding complements
in affine groups. In Section~\ref{sec:algo} we present the algorithm for 
finding parabolic subgroups, proving Theorem~\ref{thm:main}. 
In Section~\ref{subsec:unipotent} we 
consider the task of finding unipotent subgroups, proving Theorem~\ref{thm:main2}. 
In Section~\ref{sec:classical} we discuss
the deterministic and randomized complexities of finding hidden maximal Borel 
subgroups 
in the classical setting, proving Theorem~\ref{thm:main3}
and Corollary~\ref{cor:main4}. Finally in Section~\ref{sec:conclude} we conclude this paper and propose some future directions. 

\section{Preliminaries
}\label{sec:prel}

\subsection{Notations and facts}\label{subsec:fact}

Throughout the article, $q$ is a prime power. For 
$n\in \N$, $[n]=\{1, \dots, n\}$. $\Mat_n(\F_q)$ is the set of $n\times n$ 
matrices over $\F_q$. For a finite group $G$, we will be concerned with 
finding a subgroup $H$ in $G$, when it is promised that $H$ is from a fixed 
family of subgroups $\cH$. We use $\hsp(G, \cH)$ to denote the 
HSP problem with this promise, and $\rhsp(G, \cH)$ to denote the HSP via right 
cosets of $H\in \cH$. 
Let $V$ be a vector space. For a subspace $U\leq V$ and $G=\GL(V)$, let $G_U$ 
be the subgroup in $G$ consisting of elements that act as
\emph{pointwise stabilizers} on $U$. That is, $G_U=\{X\in \GL(V) : \forall u\in 
U, Xu=u\}$. 
Let $G_{\{U\}}$ be the subgroup in $G$ 
consisting of elements that act as \emph{setwise 
stabilizers} on $U$. That is, $G_{\{U\}}=\{X\in\GL(V) : XU=U\}$. Note that 
$\{G_{\{U\}} : 0<U<V\}$ is just the set of maximal parabolic subgroups. 

\begin{fact}\label{fact:frac_nonsing}
For every prime power $q$, and for every positive integers $n\geq m$, the 
probability for a random 
$n\times m$ matrix $M$ over $\F_q$ to have rank $m$ is
no less than what we have in the case of $q=2$, that is $\frac{1}{2}\cdot 
\frac{3}{4}\cdot \frac{7}{8}\cdot \dots \approx 
0.288788 > 1/4$. 
\end{fact}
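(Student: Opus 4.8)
The plan is to expose the columns of $M$ one at a time and track the event that the columns seen so far are linearly independent. Write $M=(v_1\mid\cdots\mid v_m)$ with the $v_j$ uniform and independent in $\F_q^n$. The matrix has rank $m$ exactly when $v_1,\dots,v_m$ are linearly independent, so
\[
\prob[\rk M=m]=\prod_{j=1}^{m}\prob\big[v_j\notin\langle v_1,\dots,v_{j-1}\rangle \;\big|\; v_1,\dots,v_{j-1}\text{ independent}\big].
\]
On the conditioning event, $\langle v_1,\dots,v_{j-1}\rangle$ has dimension exactly $j-1$ and hence $q^{j-1}$ elements; since $v_j$ is uniform and independent, the $j$th conditional probability equals $1-q^{\,j-1-n}$. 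Re-indexing by $i=n-j+1$ gives the closed form
\[
\prob[\rk M=m]=\prod_{i=n-m+1}^{n}\bigl(1-q^{-i}\bigr).
\]

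Next I would use the hypothesis $n\geq m$, which gives $n-m+1\geq 1$, so every exponent $i$ occurring in the product is a positive integer. Since each factor $1-q^{-i}$ lies in $(0,1)$, extending the product to all $i\geq 1$ only decreases it:
\[
\prob[\rk M=m]\;\geq\;\prod_{i=1}^{\infty}\bigl(1-q^{-i}\bigr).
\]
For each fixed $i$ the factor $1-q^{-i}$ is increasing in $q$, so this infinite product is minimized, over all prime powers, at $q=2$. This reduces everything to the numerical inequality $\prod_{i=1}^{\infty}(1-2^{-i})>1/4$.

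To finish I would split off the first few factors and bound the tail crudely. A direct computation gives $\prod_{i=1}^{4}(1-2^{-i})=\tfrac12\cdot\tfrac34\cdot\tfrac78\cdot\tfrac{15}{16}=\tfrac{315}{1024}$, while the elementary inequality $\prod_i(1-x_i)\geq 1-\sum_i x_i$ for nonnegative reals $x_i$ (proved by a one-line induction) yields $\prod_{i=5}^{\infty}(1-2^{-i})\geq 1-\sum_{i\geq 5}2^{-i}=\tfrac{15}{16}$. Multiplying, $\prod_{i=1}^{\infty}(1-2^{-i})\geq\tfrac{315}{1024}\cdot\tfrac{15}{16}=\tfrac{4725}{16384}>\tfrac14$, as claimed.

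There is no serious obstacle here. The only point needing a little care is that the tail estimate $\prod(1-x_i)\geq 1-\sum x_i$ is useful only after enough leading factors have been pulled out so that the remaining sum drops well below $1$, which is why I apply it starting from $i=5$ rather than to the full product; everything else is the routine per-column induction together with monotonicity of the factors in $q$.
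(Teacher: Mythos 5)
Your proof is correct and complete: the column-by-column computation giving $\prod_{i=n-m+1}^{n}(1-q^{-i})$, the reduction to the infinite product at $q=2$ by monotonicity, and the tail estimate via $\prod_i(1-x_i)\geq 1-\sum_i x_i$ are all sound, and the final numerics ($\tfrac{4725}{16384}>\tfrac14$) check out. The paper states this as a Fact without proof, treating it as standard; your argument is exactly the standard one that would be supplied, so there is nothing to reconcile.
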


\subsection{The quantum Fourier transform of linear spaces}
\label{sec:QFT}

In this part we briefly discuss slight generalizations
of the Fourier transform of linear spaces over $\F_q$ 
introduced in \cite{iva12} and a version  
useful for certain linear spaces of matrices.
Let $V\cong \F_q^m$ be a linear space over the field $\F_q$
and assume that we are given a nonsingular
symmetric bilinear function $\phi:V\times V\rightarrow \F_q$.
By $\C^V$ we denote the Hilbert space of dimension
$q^m$ having a designated orthonormal basis consisting
of the vectors $\ket{v}$ indexed by the elements $v\in \F_q^m$.

Let $q=p^r$ where $p$ is a prime and let
$\omega$ be the primitive $p$th root 
$e^{\frac{2\pi i}{p}}$ of unity. 
We define the quantum Fourier transform 
with respect to $\phi$
as the linear transformation $QFT_\phi$ of $\C^V$ which maps
$$\ket{v}\mbox{~~to~~}
\frac{1}{\sqrt{\size V}}\sum_{u\in V}\omega^{\Tr(\phi(u,v))}\ket{u},$$
where $v\in V$ and
$\Tr$ is the trace map from $\F_q$ to $\F_p$
defined as $\Tr(x)=\sum_{i=0}^{r-1}x^{p^i}$.
It turns out that $QFT_\phi$ is a unitary map
and, if the vectors from $V$ are represented by
arrays of elements from $\F_q$ that are
coordinates in terms of an orthonormal basis
of $V$ with respect to $\phi$ 
(that is, $\phi$ is the standard inner product of
$\F_q^m$)
then 
$QFT_\phi$ is just the $m$th tensor power
of the QFT defined in \cite{vdhi06} for $\F_q$. 
(This is the linear transformation
of $\C^{\F_q}$ that maps $\ket{x}$ ($x\in \F_q$) to 
$\frac{1}{\sqrt q}\sum_{y\in \F_q}\omega^{\Tr(xy)}\ket{y}$.)
Therefore, in this case, by Lemma~2.2 of \cite{vdhi06}, 
$QFT_\phi$ has a polynomial time approximate implementation 
on a quantum computer. In the general case, where elements of
$V$ are represented by coordinates in terms of a not necessarily
orthonormal basis w.r.t. $\phi$, the map $QFT_\phi$ can
be efficiently implemented by composing the above transform
with linear transformations of $\C^V$ corresponding to 
appropriate basis changes for $V$.

For a subset $A\subseteq V$ we adopt the standard notation 
$\ket{A}$ for the
uniform superposition of the elements of $A$, that is
$\ket{A}=\frac{1}{\sqrt{|A|}}\sum_{a\in A}\ket{a}.$
Assume that we receive the uniform superposition
$
\ket{v_0+W}=\frac{1}{\sqrt{\size {W}}}\sum_{v\in {W}}\ket{v_0+v}
$
over the a coset $v_0+W$ of the
$\F_q$-linear subspace 
$W$ of $V$ and for some $v_0\in V$.
Let $W^{\perp}$ stand for the subspace of $V$
consisting of the vectors $u$ from $\F_q^m$ such that
$\phi(u,v)=0$ for every $v\in W$. 
By results from \cite{iva12}, 
if we measure the state after the Fourier transform,
we obtain a uniformly random element 
of ${W}^\perp$.
If instead of the uniform superposition over the coset
$v_0+W$ we apply the QFT to the superposition 
$\ket{v_0+W'}=\frac{1}{\sqrt{\size{W'}}}\sum_{v\in W'}
\ket{v_0+v}$
over a subset $v_0+W'$ for $\emptyset\neq W'\subseteq W$, the resulting state 
is
$\sum_{u\in V}c_u'\ket{u},$
where 
$$c_u'=\bra{u}QFT_\phi\ket{v_0+W'}=
\frac{\omega^{\Tr\phi(v_0,u)}}{\sqrt{\size{{W'}}\size{V}}}
\sum_{v\in {W'}}\omega^{\Tr\phi(v,u)}.
$$
For $u\in W^\perp$ we have
\begin{equation}\label{eq:qft_subset}
|c_u'|=\frac{\size{{W'}}}{\sqrt{\size{{W'}}\size{V}}}=
\frac{\sqrt{\size{W'}}}{\sqrt{\size{W}}}\cdot
\frac{1}{\sqrt{\size{{W}^\perp}}},
\end{equation}
\noindent whence, after measurement the chance of obtaining
a particular $u\in W^\perp$ is $\frac{\size{W'}}{\size{W}}$ 
times as much as if we had in the case of the uniform distribution over
$W^\perp$.

In this paper we consider subspaces and certain subsets
of the linear space $\Mat_{n}(\F_q)$. If we take the 
inner product $\phi_0(A,B)=\tr(AB^T)$
  the elementary matrices form an orthonormal basis.
It follows that $QFT_{\phi_0}$, being just the 
$n^2$th tensor power of the QFT of $\F_q$, can be efficiently 
approximated. However, for the purposes of this paper it turns out to be
more convenient using the inner product $\phi(AB)=\tr(AB)$.
The map $QFT_{\phi}$ is the composition of $QFT_{\phi_0}$ with
taking transpose (the latter is just a permutation of the matrix entries).
The main advantage of considering $QFT_{\phi}$ is that it is invariant
in the following sense: we always obtain the same $QFT_{\phi}$ even
if we write matrices of linear transformations of 
the space $V=\F_q^n$ in terms of various bases.
In particular, in our hidden subgroup algorithms we can think of
 our matrices in terms of a basis a priori unknown to us
in which the hidden subgroup has a natural form, for example
lower block triangular.

\subsection{A common procedure for HSP algorithms}\label{subsec:common}

Suppose we want to find some hidden subgroup $H$ in $G=\GL_n(\F_q)$. Let 
$V=\F_q^n$. We present 
the standard procedure that produce 
a uniform superposition over a coset of the hidden subgroup. This
part will be common in (most of) the hidden subgroup algorithms
presented in this paper. 
First we show how to produce the uniform superposition over $\GL(V)$.
The uniform superposition 
$\frac{1}{q^{n^2}}\sum_{X\in \Mat_n(\F_q)}\ket{X}$
over $\Mat_n(\F_q)$ can be produced using the QFT
for $\F_q^{n^2}$. Then, in an additional qubit we compute
a Boolean variable according to whether or not
the determinant of $X$ is zero. We measure this qubit,
and abort if it indicates that the matrix $X$ has determinant zero.
This procedure gives the uniform superposition over $\GL(V)$ with
success probability more than $\frac{1}{4}$. 

Next we assume that we have the uniform superposition 
$\frac{1}{\sqrt{\size{\GL(V)}}}\sum_{X}\ket{X}\ket{0}$, summing over $X\in 
\GL(V)$.
Recall that $f$ is the function hiding the subgroup. We appended a new quantum 
register, 
initialized to zero, for holding
the value of $f$. We compute $f(X)$ in this second register, 
measure and discard it. The result is
$\ket{A{ H}}=
\frac{1}{\sqrt{\size{ H}}}\sum_{X\in { H}}\ket{AX}$
for some unknown $A\in\GL(V)$. $A$ is actually uniformly random,
but in this paper we will not make use of this fact.

\section{Maximal parabolic subgroups}\label{subsec:stab_space}
  
In this section, 
we settle the HSP when 
the hidden subgroup is 
a maximal parabolic subgroup, which will be used in the main algorithm 
in Section~\ref{sec:algo}. It also helps to illustrate the idea of 
approximating a subgroup in the general linear group by a 
subspace in the linear space of matrices. 
  
Recall that a parabolic subgroup ${ H}$ is maximal if it stabilizes some 
subspace $0<U<\F_q^n$.
We mentioned in 
Section~\ref{subsec:fact} that they are 
just setwise stabilizers of subspaces. Determining ${ H}$ is equivalent to 
finding $U$. Set $V=\F_q^n$.

\begin{proposition}\label{prop:max}
Let $G=\GL_n(\F_q)$, and $\cH=\{G_{\{U\}} : 0<U<V\}$. $\hsp(G, 
\cH)$ can be 
solved in quantum polynomial time. 
\end{proposition}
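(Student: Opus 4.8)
The plan is to reduce the problem to identifying the unique proper nontrivial subspace $U$ with $H=G_{\{U\}}$, and to extract $U$ as the image, respectively the kernel, of a random matrix obtained by Fourier sampling a coset of $H$ inside the linear space $\Mat_n(\F_q)$. First I would run the common procedure of Section~\ref{subsec:common} to prepare, after $O(1)$ expected repetitions, the uniform superposition $\ket{AH}$ over a left coset of $H$ for an unknown $A\in\GL(V)$. Working in a basis (a priori unknown to us, but we may pretend it is fixed since $QFT_\phi$ is basis-independent) in which $U=\langle e_{n-d+1},\dots,e_n\rangle$ with $d=\dim U$, the subgroup $H$ consists of the invertible matrices whose upper-right $(n-d)\times d$ block vanishes, hence $H\subseteq W$ where $W=\{X\in\Mat_n(\F_q):XU\subseteq U\}$ is linear, and $AH\subseteq AW$. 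Counting block entries and using Fact~\ref{fact:frac_nonsing} gives $|AH|/|AW|=|H|/|W|=\frac{|\GL_{n-d}(\F_q)|}{q^{(n-d)^2}}\cdot\frac{|\GL_d(\F_q)|}{q^{d^2}}>\frac{1}{16}$.

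Next I would apply $QFT_\phi$ (with $\phi(X,Y)=\tr(XY)$, which is efficiently implementable) to $\ket{AH}$ and measure. By the analysis around Equation~\eqref{eq:qft_subset} with $W'=AH$, the outcome $Z$ lands in $(AW)^\perp$ with probability more than $\frac{1}{16}$ and is then uniformly random there. A short computation identifies $(AW)^\perp=\{Z:\operatorname{im}Z\subseteq U,\ AU\subseteq\ker Z\}\cong\operatorname{Hom}(V/AU,U)$. When $d\le n/2$, a uniformly random element of $\operatorname{Hom}(V/AU,U)$, viewed as a $d\times(n-d)$ matrix, has rank exactly $d$ with probability more than $\frac{1}{4}$ by Fact~\ref{fact:frac_nonsing}, and in that case $\operatorname{im}Z=U$. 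So a single trial recovers $U$ with constant probability whenever $\dim U\le n/2$.

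To cover $\dim U>n/2$ I would repeat the whole procedure with the function $g(x)=f(x^{-1})$, which hides $H$ via right cosets; the common procedure then yields $\ket{HA}$, and the analogous computation shows that Fourier sampling lands with probability more than $\frac{1}{16}$ in a uniformly random element of $(WA)^\perp=\{Z:\operatorname{im}Z\subseteq A^{-1}U,\ U\subseteq\ker Z\}\cong\operatorname{Hom}(V/U,A^{-1}U)$; when $d\ge n/2$ such a $Z$ has kernel of dimension exactly $d$ with probability more than $\frac{1}{4}$, and then $\ker Z=U$. Thus, since $d$ need not be known in advance, running both versions guarantees that for every $1\le\dim U\le n-1$ at least one of them outputs, with constant probability, a subspace equal to $U$.

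Every candidate subspace $U'$ so produced is then checked against the oracle: compute an explicit generating set of the maximal parabolic $G_{\{U'\}}$ and verify via $f$ that each generator lies in $H$; since the only $G_{\{U'\}}$-invariant subspaces are $0$, $U'$ and $V$, this test succeeds precisely when $U'=U$, in which case we output $H=G_{\{U'\}}$. Iterating the construction $O(\log(1/\delta))$ times drives the failure probability below $\delta$, and every step takes time $\poly(n,\log q)$. The point requiring the most care is the dichotomy between the two coset variants: the left-coset state exposes $U$ as $\operatorname{im}Z$ only for $\dim U\le n/2$ (for larger $U$ one gets $\operatorname{im}Z\subsetneq U$ while $\ker Z=AU$ depends on the unknown $A$), which is exactly why the right-coset version is needed in the complementary range; verifying the stated descriptions of $(AW)^\perp$ and $(WA)^\perp$ and the two rank estimates is then routine.
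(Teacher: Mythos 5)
Your proposal is correct and follows essentially the same route as the paper: approximate the coset $AH$ by the coset of the matrix space $W=\{X:XU\leq U\}$ (with density $>1/16$ by Fact~\ref{fact:frac_nonsing}), Fourier sample to get a near-uniform element of $(AW)^\perp$, read off $U$ as the image in one dimension range and switch to right cosets to read it off as the kernel in the complementary range, and verify candidates via generators of the stabilizer. The only differences are cosmetic (you parametrize by $\dim U$ rather than its codimension, and you phrase the right-coset case via $\ker Z=U$ instead of the dual subspace), so nothing further is needed.
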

\begin{proof}
Let $H$ be the hidden maximal parabolic subgroup, stabilizing some 
$(n-d)$-dimensional subspace $U\leq \F^n$. Note that $d$ is unknown to us. 
Before describing the algorithm, we observe the following: checking correctness 
of a guess for $U$, and hence for
$ H$, can be done by applying the oracle to a set of generators of the
stabilizer of $U$, as there are no inclusions between maximal parabolic
subgroups. 

Now we present the algorithm. First produce a coset superposition $\ket{A{ H}}$ 
for unknown $A\in\GL(V)$, as described in Section~\ref{subsec:common}.
Let 
${ W}=\{X\in \Mat_n(\F_q):XU\leq U\}.$
In a 
basis whose last $n-d$ elements are from $U$,
 ${ W}$ is the subspace of the matrices of the form
$
\begin{pmatrix}
B & \\
C & D 
\end{pmatrix},
$
where $B$ and $D$
are not necessarily invertible, and the empty space in the 
upper right corner means
a $d\times (n-d)$ block of zeros. Noting that such a matrix is invertible if 
and only if $B$ and $D$ are both invertible, we have ${ H}\subset { W}$ and
$\frac{\size{A{ H}}}{\size{A{ W}}}=
\frac{\size{{ H}}}{\size{{ W}}}
>\frac{1}{4\times 4}.$
Also, viewing in 
the same basis, 
$(A{ W})^\perp A=W^\perp$
 consists of the matrices of the
form
$
\begin{pmatrix}
 & \mbox{~} \\
* & 
\end{pmatrix},
$
where $*$ stands for an arbitrary $(n-d)$ times $d$ matrix. This implies that
$(A{ W})^\perp=\{X\in\Mat_n(\F_q): XV\leq U\mbox{~and~}XU=\zerovec\}A^{-1}.$

If $d\geq n/2$, we apply QFT to the \emph{left} coset superposition 
$\ket{A H}$
and perform a measurement, for any element $X$ in $(A{ W})^\perp$, the 
measurement will produce $X$
with probability no less than $\frac{1}{16 \size{(A{ W})^\perp}}$.
It follows that $XA$ will be a particular matrix
 from $(A{ W})^\perp A$ with probability at least 
$\frac{1}{16 \size{(A{ W})^\perp}}$. 
Then more than $\frac{1}{4}$ of the $(n-d)\times d$ matrices
have rank $n-d$. It follows that with probability
at least $\frac{1}{64}$, the matrix $XA$ will be
a matrix from $(A{ W})^\perp A$ whose image is $U$.
As $XV=XAV$, we can conclude that
$XV=U$ with probability more than $\frac{1}{64}$. 

For the case $d < n/2$ we consider the
HSP via \emph{right} cosets of $H$, and let act matrices on row vectors from 
the right. Via the same procedure as above, it will reveal the dual subspace 
stabilized by $H$, which determines $H$ uniquely as well. 

Finally, though $d$ is not known to us, depending on whether $d\geq n/2$, one 
of these two procedures 
with produce $U$ correctly with high probability. So we perform the two 
procedures 
alternatively, and use the checking procedure to determine which produces the 
correct result. This concludes the algorithm. 
\end{proof}
\section{A tool: finding complements in small stabilizers}\label{sec:tool}
In this section, we introduce and partially
settle a new instance of the hidden subgroup 
problem. This will be an important technical tool for the main algorithm.

Consider the hidden subgroup problem in the following setting. The ambient 
group $G\leq \GL_n(\F_q)$ consists of the
invertible matrices of the form
$
\begin{pmatrix}
b & \\
v & I
\end{pmatrix},
$
where $b\in \F_q$, 
$v$ is a column vector from $\F_q^{n-1}$, and $I$ is the 
$(n-1)\times (n-1)$ identity matrix. 
The family of hidden subgroups $\cH$ consists of all conjugates of $H_0$, where 
$H_0$ is the subgroup of diagonal matrices in $G$:
$
{ H}_0=\left\{\begin{pmatrix}
b & \\
 & I
\end{pmatrix}: b\in\F_q^* \right\}.
$
Note that any conjugate of $H_0$ is 
$
{ H}_v=
\left\{
\begin{pmatrix}
b & \\
(b-1)v & I
\end{pmatrix}: b\in\F_q^* \right\},
$
for some $v\in \F_q^{n-1}$. We will consider the HSP via right cosets in this 
setting.

The group $ G$ has an abelian normal
subgroup $ N$ consisting of the
matrices of the form
$
\begin{pmatrix}
1 & \\
v & I
\end{pmatrix}
$
isomorphic to $\F_q^{n-1}$, and the subgroups ${ H}_v$
are the semidirect complements
of $ N$. For $n=2$, $ G$ is the affine group $\AGL_1(\F_q)$. 
The HSP in  $\AGL_1(\F_q)$ is solved in quantum polynomial
time in \cite{mrrs04} over prime fields
and in \cite{dmr10} in the general case
using the non-commutative Fourier transform of the
group $\AGL_1(\F_q)$. The algorithm served as
the main technical ingredient in \cite{dmr10}
for finding Borel subgroups in $\GL(\F_q^2)$.
A generalization for certain similar
semidirect product groups is given in \cite{bcvd05}.
To our knowledge, the first occurrence of the idea of 
comparing with a coset state in a related abelian group is in \cite{bcvd05}. 
Here, due to the ``nice'' representation of the group
elements, we can apply the same idea in a simpler
way, while in \cite{bcvd05} 
it was needed to be combined with a discrete logarithm
algorithm which is not necessary here.

\begin{proposition}\label{prop:tool}
Let $G$ and $\cH$ be as above, and suppose $q=\Omega(n/\log n)$. Then $\rhsp(G, 
\cH)$ can be solved in quantum polynomial time.
\end{proposition}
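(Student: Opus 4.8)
The plan is to follow the common procedure of Section~\ref{subsec:common} (adapted to right cosets) to obtain a coset state, then show that this state is close — in the sense of Equation~\eqref{eq:qft_subset} — to a uniform superposition over a coset of a \emph{linear} subspace of $\Mat_n(\F_q)$, so that a single application of $QFT_\phi$ followed by a measurement yields, with constant probability, a matrix from which $v$ (and hence $H_v$) can be read off.

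First I would produce, via the method of Section~\ref{subsec:common} specialized to $G$ (note $\size{G}=q^{n-1}(q-1)$, so the analogue of the "determinant nonzero" filtering step succeeds with good probability here as well), a right-coset state $\ket{H_v A}=\frac{1}{\sqrt{\size{H_v}}}\sum_{X\in H_v}\ket{XA}$ for an unknown $A\in G$. Now the key observation is that $H_v$ sits inside the affine subset $L_v+I$, where $L_v$ is the one-dimensional $\F_q$-linear space spanned by the matrix $E_v=\begin{pmatrix} 1 & \\ v & 0\end{pmatrix}$; indeed $\begin{pmatrix} b & \\ (b-1)v & I\end{pmatrix}=(b-1)E_v+I$. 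Thus $H_v A\subseteq (L_v+I)A=L_vA+A$, and the ratio $\size{H_v}/\size{L_v}=(q-1)/q$ is bounded below by a constant. So $\ket{H_vA}$ is a uniform superposition over a subset of size $(q-1)$ of the coset $L_vA+A$ of the $q$-element linear space $L_vA$ (here left/right does not matter since $L_vA$ is just a line). Applying $QFT_\phi$ with $\phi(M,N)=\tr(MN)$ and measuring, Equation~\eqref{eq:qft_subset} guarantees that each element of $(L_vA)^\perp$ is output with probability at least $\frac{q-1}{q}\cdot\frac{1}{\size{(L_vA)^\perp}}$. Because $QFT_\phi$ is basis-invariant, $(L_vA)^\perp=\{M:\tr(MA E_v)=0\}$ can be analyzed by conjugating; an element of $(L_vA)^\perp$ has the form $MA^{-1}$ with $\tr(ME_v)=0$, i.e. $M$ ranges over the hyperplane orthogonal to $E_v$ in $\Mat_n(\F_q)$ with respect to the trace form. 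From $M$ one recovers the line $\langle E_v\rangle$ — and hence $v$ — provided $M$ is "generic enough" relative to $E_v$; a constant fraction of such $M$ determine $\langle E_v\rangle$ uniquely, for instance because the orthogonal complement of $\langle M\rangle^\perp\cap\langle E_v\rangle^\perp$ inside the pencil it spans pins down $E_v$ up to scalar. After identifying $E_v$ we read off $v$ and reconstruct $H_v$, checking correctness by applying the oracle to generators.

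There is a subtlety I have glossed over, and it is where I expect the main work to lie: recovering $v$ from a single random hyperplane $\{M:\tr(ME_v)=0\}$ is not obviously possible, since one linear constraint does not determine the line $\langle E_v\rangle$ in an $n^2$-dimensional space. The fix is to repeat the coset-state preparation and measurement $O(n^2)$ times, collecting measured matrices $M_1,\dots,M_t\in (L_vA_j)^\perp$ for \emph{independent} unknown $A_j$; but then the constraints $\tr(M_jA_j E_v)=0$ involve different $A_j$, so they are not jointly linear in $E_v$ unless we also recover the $A_j$. This is precisely the point where the hypothesis $q=\Omega(n/\log n)$ must enter — it forces $\size{G}=q^{n-1}(q-1)$ large enough (indeed $\log\size{G}=\Omega(n^2/\log n)$ is false; rather one wants that the \emph{sample} determines $v$) — so the real argument is surely different from the naive one, and I would instead exploit that $H_v$ determines, and is determined by, the single vector $v$ together with the structure of $N$: one should argue that the measured state over many trials, when post-processed by the \emph{abelian} HSP machinery of Section~\ref{sec:QFT} applied to the quotient $G/N\cong\F_q^\ast$ acting on $N\cong\F_q^{n-1}$, reveals $v$ directly. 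The honest route, mirroring \cite{mrrs04,dmr10,bcvd05}, is: measure to get elements of $(L_vA)^\perp$, show each such element, expressed in the natural coordinates, has first-row part proportional to $(1,v^T)$ up to the unknown $A$, and average/intersect over $\Theta(n/\log q)$ samples — the number $q=\Omega(n/\log n)$ being exactly what makes each sample carry $\Omega(\log q)$ "useful coordinates" of $v$ so that $O(n/\log q)$ of them suffice, with the union bound over $\F_q^{n-1}$ costing only $(n-1)\log q$ bits of success probability. Tightening this counting — controlling how much of $v$ each independent measurement pins down, and checking the $q=\Omega(n/\log n)$ threshold is the right one — is the crux of the proof; everything else is the now-standard "coset state is near an abelian coset state" calculation.
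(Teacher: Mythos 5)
Your setup coincides with the paper's: $H_v-I$ is all but one point of the line $W=\F_q E_v$, this line is invariant under right multiplication by elements of $G$ (which is exactly why right cosets are used), so by Equation~(\ref{eq:qft_subset}) Fourier sampling a right-coset state returns each element of $W^\perp$ with probability at least $\frac{q-1}{q}$ times uniform. But the recovery of $v$ from these samples --- which you correctly identify as the crux --- is left open, and each of your candidate fixes is wrong. The resolution is elementary once you work in the right ambient space: the paper applies the QFT not on all of $\Mat_n(\F_q)$ but on the $n$-dimensional space $L$ of matrices supported on the first column, which contains $X-I$ for every $X\in G$ and contains $W$. Inside $L$, $W^\perp$ is an $(n-1)$-dimensional subspace; running the sampling $n-1$ times yields any specific $(n-1)$-tuple with probability at least $\bigl(\tfrac{q-1}{q}\bigr)^{n-1}$ times its uniform probability, hence a spanning set of $W^\perp$ with probability at least $\tfrac14\bigl(\tfrac{q-1}{q}\bigr)^{n-1}$, and then $W=(W^\perp)^\perp$ hands you $v$. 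If you insist on the full $n^2$-dimensional space you must note that the constraint $\tr(ME_v)=0$ involves only one row of $M$ (that row is \emph{orthogonal} to $(1,v)$, not proportional to it as you assert) and project onto it --- otherwise you would need $n^2-1$ spanning samples and a far stronger hypothesis on $q$.

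This also pins down the actual role of $q=\Omega(n/\log n)$, which you misplace. It is not that each sample carries $\Omega(\log q)$ ``coordinates of $v$''; you always need $n-1$ samples to cut out an $(n-1)$-dimensional hyperplane, regardless of $q$. Rather, because each sample is only pointwise $\tfrac{q-1}{q}$-close to uniform on $W^\perp$, the guaranteed probability of obtaining a basis after $n-1$ runs is $\Omega\bigl((1-\tfrac1q)^{n-1}\bigr)\approx e^{-n/q}$, and this is $1/\poly(n)$ --- so that $\poly(n)$ repetitions suffice --- precisely when $q=\Omega(n/\log n)$. Your proposed $\Theta(n/\log q)$ sample count, the $O(n^2)$-repetition variant, and the detour through $G/N$ would all either fail to determine $v$ or destroy the success probability.
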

\begin{proof}
Assume that the hidden subgroup is $H= H_v$ for some $v\in \F_q^{n-1}$.
As right cosets of $H$ are being considered, we have superpositions over right 
cosets 
${ H}A$  for some unknown $A\in { G}$. 
The actual information of each matrix $X$ from $ G$
is contained in $X-I$, a matrix from the $n$-dimensional space
$ L$ of matrices whose last $n-1$ columns are zero.
We will work in $ L$.
Set $$\widetilde { W}'=
\{X-I:X\in{ H}\}=
\left\{
\begin{pmatrix}
b & ~\\
bv & ~
\end{pmatrix}: -1\neq b\in \F_q
\right\}
\text{ and }
{ W}=
\left\{
\begin{pmatrix}
b & ~\\
bv & ~
\end{pmatrix}: b\in \F_q
\right\}.
$$
Then $ W$ is a one-dimensional subspace of $ L$.
It turns  out that ${ W}={ W}A$ for every matrix $A\in { G}$
(that is why it is convenient to consider the HSP via right cosets).
It follows that
$\{(Y+I)A-I:Y\in { W}\}=\{YA+(A-I):Y\in { W}\}={ W}+A-I,$
whence the set $\{XA-I:X\in { H}\} $ equals
${ W}'+A-I$ for 
${ W}' = \widetilde { W}'A$. 

Therefore, after an application of the QFT of $ L$ to the state
$\ket{{ H}A-I}=\ket{{ W}'+A-I}$ and a measurement, we obtain every
specific element of ${ W}^\perp$ with
probability at least $\frac{q-1}{q}\frac{1}{\size{{ W}^\perp}}$.
More generally, if we do the procedure for a product of $n-1$ 
superpositions over right cosets of ${ H}$ we obtain each 
specific $(n-1)$-tuple of vectors from ${ W}^\perp$ 
with probability at least
$(\frac{q-1}{q})^{n-1}\frac{1}{{\size{{ W}^\perp}}^{n-1}}.$
Since the probability that $n-1$ random elements from a space of dimension
$n-1$ over $\F_q$ span the space is at 
least $\frac{1}{4}$, therefore, the probability of getting a basis of ${ 
W}^\perp$ is $\Omega((\frac{q-1}{q})^{n-1})$. Using this basis,
we obtain a guess for $ W$ and $ H$ as ${ H}$ is
the set of invertible matrices from ${ W}+I$.
A correct guess will be obtained expectedly with
$O((\frac{q}{q-1})^{n-1})$ repetitions. This is polynomial
if $q$ is $\Omega(n/\log n)$.
\end{proof}

Finally we note that for constant $q$, or more generally for constant 
characteristic, 
\cite{FIMSS-stoc} can be used to obtain a polynomial time algorithm. On the 
other hand, it is intriguing to study the case of ``intermediate'' 
values of $q$.


\section{The main algorithm}\label{sec:algo}
\subsection{The structure of the algorithm}
In this subsection, we describe the structure of an algorithm for finding 
parabolic subgroups in general linear groups, proving Theorem~\ref{thm:main}. 
Let $G=\GL_n(\F_q)$, $V=\F_q^n$, and the hidden parabolic subgroup $H$ be the 
stabilizer of 
the flag $V>U_1>U_2>\dots > U_{k-1}>0$. Note that the parameter of the flag, 
including $k$, is unknown to us. The algorithm will output the hidden flag, 
from which a generating set of 
the parabolic subgroup can be constructed easily. 

Let $T=U_{k-1}$ denote the smallest subspace in the flag. The 
algorithm relies on the 
following subroutines crucially. These two subroutines are described in 
Section~\ref{subsec:base_case} and 
Section~\ref{subsec:recursion}, respectively. 

\begin{proposition}\label{prop:guess_part}
Let $G$, $H$ and $T$ be as above. There exists a quantum polynomial-time 
algorithm, that given access to an oracle hiding $H$ in $G$, produces three 
subspaces $W_1$, $W_2$ and $W_3$, s.t. one of $W_i$ 
is a nonzero subspace contained in $T$ 
with high probability.
\end{proposition}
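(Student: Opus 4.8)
\noindent\emph{Proof plan.}
The plan is to isolate the bottom space $T$ of the flag by intersecting $H$ with a small subgroup for which the hidden subgroup problem reduces to the affine situation of Section~\ref{sec:tool}. Pick a uniformly random hyperplane $P=\ker\ell$ of $V$ (with $\ell$ a nonzero linear functional) and let $G_P=\{X\in\GL(V):Xu=u\text{ for all }u\in P\}$ be its pointwise stabilizer. Every $X\in G_P$ has the form $I+w\ell$ with $w\in V$ and $\ell w\neq -1$, so $G_P$ is a copy of the ambient group $G$ of Section~\ref{sec:tool}, and a uniform superposition over it is easy to prepare. Restricting the hiding function of $H$ to $G_P$ hides $H_P:=H\cap G_P$ in $G_P$, and a short computation shows that $H_P=\{I+w\ell:w\in U_j,\ \ell w\neq -1\}$, where $U_j$ is the smallest member of the flag that is not contained in $P$. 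Since $T$ is the overall smallest member, $U_j=T$ precisely when $T\not\subseteq P$, which happens with probability at least $1-q^{-\dim T}$.

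Assuming $U_j=T$, I would run the argument of Section~\ref{sec:tool} essentially verbatim: passing from $X$ to $X-I$ identifies $G_P$ with a dense subset of the $n$-dimensional space $L\cong\F_q^n$, inside which $H_P$ becomes a dense subset of the subspace $W=\{w\ell:w\in T\}$; under $L\cong\F_q^n$ this $W$ is $T$ itself, and it is invariant under right multiplication by $G_P$, which is why one works with the HSP via right cosets. Applying $QFT_\phi$ on $L$ to a right-coset state and measuring then produces, by the analysis of Section~\ref{sec:QFT}, a measurement outcome supported on $(T\cap P)^\perp$, with the outcomes lying in $T^\perp$ carrying a $\frac{q-1}{q}$ fraction of the weight and distributed uniformly there.

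Repeating this $\poly(n)$ times, let $S$ be the span of the collected outcomes. As soon as enough samples have been taken one has $T^\perp\subseteq S\subseteq(T\cap P)^\perp$, hence $T\cap P\subseteq S^\perp\subseteq T$: thus $S^\perp$ is automatically a subspace of $T$, and it is nonzero whenever $T\cap P\neq\zerovec$ — in particular whenever $\dim T\geq 2$ — and when $q$ is large one in fact obtains $S=T^\perp$ and $S^\perp=T$. The delicate regime is $\dim T=1$ together with a small $q$: for a random hyperplane with $T\not\subseteq P$ the subgroup $H_P$ degenerates (it can even be trivial), while for $T\subseteq P$ the procedure returns a space that contains, but is not contained in, $T$. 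I would handle this regime by running the same scheme with $P$ a uniformly random subspace of larger (say codimension-two) codimension, so that the analogue of $G_P$ — now a semidirect product involving $\GL_2(\F_q)$ rather than just $\F_q^*$ — is big enough that $H\cap G_P$ still retains nontrivial information about the line $T$; a more careful version of the analysis above then recovers a nonzero subspace of $T$. Emitting, as $W_1,W_2,W_3$, the subspace $S^\perp$ from the hyperplane procedure together with the two subspaces produced by the larger-codimension variant, a case analysis over $\dim T$ and the size of $q$ — using in each regime an event of probability bounded below by an absolute constant — shows that with probability bounded below by a positive constant (hence, after a few repetitions, with high probability) one of $W_1,W_2,W_3$ is a nonzero subspace contained in $T$.

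The step I expect to be the main obstacle is precisely the degenerate case $\dim T=1$ with $q$ small: there the clean reduction of the restricted problem over $G_P$ to the situation of Section~\ref{sec:tool} breaks down, so one is pushed into the non-commutative codimension-two stabilizer, and one must argue separately about which of the emitted candidates is the valid one in each regime — which is why the statement promises only the short list $W_1,W_2,W_3$ rather than $T$ itself.
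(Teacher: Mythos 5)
Your first two procedures are essentially the paper's. For $\dim T\geq 2$ (any $q$): the paper intersects $H$ with the abelian unipotent part $N$ of the pointwise stabilizer $G_{U'}$ of a random hyperplane and reads off $U'\cap T$ via the abelian HSP; your version --- Fourier-sampling right cosets of $H\cap G_P$ over the whole affine group and taking $S^\perp$ with $T\cap P\subseteq S^\perp\subseteq T$ --- computes the same thing and is correct. For $\dim T=1$ and $q\geq n$ you correctly reduce to Proposition~\ref{prop:tool} (though note you must then take the span of exactly $n-1$ samples and accept success probability $\Omega((1-1/q)^{n-1})$, not the span of polynomially many samples, which would almost surely pick up a vector outside $T^\perp$).

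The genuine gap is the third case, $\dim T=1$ with $q<n$, and your proposed fix does not work as stated. Replacing the hyperplane by a codimension-two subspace $P$ makes the ambient group $G_P\cong \GL_2(\F_q)\ltimes \F_q^{2(n-2)}$, and you give no algorithm for the resulting HSP; none of the tools in the paper applies to it. Worse, the abelian normal subgroup of that $G_P$ is of no help here: when $T\not\subseteq P$ its intersection with $H$ consists of maps killing $T$, so its image subspace does not land inside $T$ at all, and the nonabelian $\GL_2$ part is exactly where the difficulty of Proposition~\ref{prop:tool} for small $q$ reappears. The paper's actual resolution is quite different and exploits the smallness of $q$ rather than fighting it: since $q<n$, the event that the random hyperplane $U'$ \emph{contains} $T$ but \emph{not} $U_{k-2}$ has probability $\Omega(1/q)=\Omega(1/n)$; conditioned on this, the abelian HSP in $H\cap N$ yields $V'=U'\cap U_{k-2}$, which contains $T$, and the flag restricted to $V'$ collapses to $V'>T>0$, so the hidden subgroup of the restricted problem in $\GL(V')$ is a \emph{maximal} parabolic and Proposition~\ref{prop:max} recovers $T$. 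Without some such reduction your third candidate $W_3$ is unsupported, so the case analysis covering all $(\dim T,q)$ regimes is incomplete.
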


\begin{proposition}\label{prop:verify}
Let $G$, $H$ and $T$ be as above. There exists a classical polynomial-time 
algorithm, that given access to an oracle hiding $H$ in $G$, and some $0<W\leq 
V$, determines whether $W\leq T$, and in the case of $W\leq T$, whether $W=T$.
\end{proposition}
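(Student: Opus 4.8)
The plan is to exploit two facts. The first is structural: by the standard block-lower-triangular description of a parabolic subgroup, the $\F_q[H]$-module $V$ is uniserial, its lattice of submodules being exactly the flag $V=U_0>U_1>\dots>U_{k-1}=T>0$; in particular $T$ is the socle of $V$, i.e.\ the unique minimal nonzero $H$-invariant subspace, and $H$ induces on $T$ the full group $\GL(T)$. The second is that the hiding function $f$ supplies a classical membership test in $H$: one has $g\in H$ iff $f(g)=f(I)$, so for any subgroup $K\le\GL(V)$ specified by a generating set we can decide ``$K\le H$?'' classically with $O(1)$ queries per generator.

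The heart of the argument is a transvection criterion for membership in $T$. For a nonzero $v\in V$ and a linear form $\lambda$ with $\lambda(v)=0$, write $\tau_{\lambda,v}=I+\lambda\otimes v$; this is a unipotent element of $\GL(V)$. A direct computation shows that $\tau_{\lambda,v}$ stabilises the flag iff $U_i\subseteq\ker\lambda$ for every flag member $U_i$ not containing $v$; if $U_j$ denotes the smallest flag member containing $v$, this is equivalent to $\ker\lambda\supseteq U_{j+1}$. Since the $\tau_{\lambda,v}$ with $\lambda(v)=0$ form a subgroup isomorphic to the space of such forms, all of them lie in $H$ iff $\langle v\rangle$ contains $U_{j+1}$, i.e.\ iff $U_{j+1}=0$, i.e.\ iff $v\in T$. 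Thus ``$v\in T$?'' is decided by querying $f$ on the $n-1$ explicit generators $\tau_{\lambda_1,v},\dots,\tau_{\lambda_{n-1},v}$, where $\lambda_1,\dots,\lambda_{n-1}$ is a basis of the forms vanishing on $v$. Part (a) is then immediate: $T$ is a subspace, so $W\le T$ iff every vector of a fixed basis of $W$ passes this test --- $O(n\dim W)$ classical queries in all.

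For part (b), assume $W\le T$. Because $T$ is the socle and $W\neq 0$, one has $W=T$ iff $W$ is $H$-invariant iff $\dim W=\dim T$, so it suffices to recover $T$, or merely $\dim T$. Applying the criterion of part (a) to random vectors does not suffice, since $T\setminus W$ may be an exponentially small fraction of $V$. The intended route is to use $W$ as a foothold together with the oracle restricted to subgroups attached to $W$ --- for instance the pointwise stabiliser $G_W=\{g\in\GL(V):g|_W=I\}$, on which $f$ hides $H\cap G_W$ and whose induced action on $V/W$ is again of the type handled above --- enlarging the current candidate step by step and halting when no further enlargement is produced.

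The main obstacle is exactly this enlargement step: one must guarantee that the procedure terminates with $T$ after polynomially many rounds and, above all, that each round genuinely produces new vectors of $T$, rather than stalling or overshooting $T$. This is where the fact that $H$ restricts to the full $\GL(T)$ on $T$ --- so that the $H$-orbit of a single nonzero vector of $W$ already spans $T$ --- has to be used quantitatively, and converting it into a clean classical procedure is the delicate part; part (a), by contrast, I expect to be routine.
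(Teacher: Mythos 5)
Your part (a) is correct and is essentially the paper's own first criterion in different clothing: checking, for each basis vector $v$ of $W$, that all transvections $\tau_{\lambda,v}$ with $\lambda(v)=0$ lie in $H$ is equivalent to checking $H\supseteq\{X\in \GL(V):(X-I)V\leq W\}$ on a generating set, which is exactly how the paper decides $W\leq T$ (its Lemma on checking). Your verification that this subgroup lies in $H$ iff $W\leq U_{k-1}$ is sound, and the query count is polynomial.

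Part (b), however, has a genuine gap, which you yourself flag. Reducing ``$W=T$'' to ``$W$ is $H$-invariant'' is true but unusable as stated: the oracle lets you test $K\leq H$ for an explicitly generated $K$, not $H\leq K$, so invariance of $W$ under $H$ cannot be queried directly; and the iterative ``enlargement via $G_W$'' is precisely the step that needs a new idea and is left unproved (there is no argument that each round strictly enlarges the candidate inside $T$, nor that the restricted oracle on $G_W$ hides anything of the required form). The paper's missing ingredient is a closed-form criterion: fix any direct complement $W'$ of $W$ in $V$ and set $\cL'=\{X\in\GL(V):(X-I)V\leq W'\mbox{~and~}(X-I)W'=0\}$, an abelian group isomorphic to $\mathrm{Hom}(W,W')$. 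Given $W\leq T$, one has $W=T$ if and only if $H\cap\cL'=\{I\}$: if $W<T$ then $T\cap W'\neq\zerovec$ and $\{X:(X-I)W'=\zerovec,\ (X-I)W\leq T\cap W'\}$ is a nontrivial subgroup of $H\cap\cL'$ (it fixes the flag since its image lies in $T$), whereas if $W=T$ then any $X\in H\cap\cL'$ satisfies $(X-I)V=(X-I)W\leq W\cap W'=\zerovec$, so $X=I$. Since $f$ restricted to $\cL'$ hides $H\cap\cL'$, deciding triviality is an instance of the abelian hidden subgroup problem on an explicit abelian group; note that this step is quantum polynomial time (the proposition's word ``classical'' notwithstanding). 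Without some such criterion your part (b) does not go through.
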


Given these two subroutines, the algorithm proceeds as follows. It starts with 
checking whether $k=1$, that is whether $H=G$. This 
can be done easily: produce a set of 
generators of $G$, and check whether the oracle returns the same on all of 
them. If $k=1$, return the trivial flag $V>0$. 

Otherwise, it repeatedly calls the subroutine in 
Proposition~\ref{prop:guess_part} until that subroutine produces subspaces 
$W_1$, $W_2$ and $W_3$, such that for some $i\in[3]$, we have $\zerovec<W_i\leq 
T$. 
This can be verified by
Proposition~\ref{prop:verify}. 
Let $W$ be this subspace. The second subroutine then also tells whether $W=T$. 

After getting $0<W\leq T$, the algorithm fixes a subspace $W'$ to be 
any direct complement of $W$ in $V$, and makes a recursive call to the HSP with 
a new ambient group $G'$, and a new hidden subgroup $H'$, 
as follows. $G'$ is 
$\{X\in \GL(V):XW'\leq W'\mbox{~and~}(X-I)W=\zerovec\},$
which is isomorphic to $\GL(W')\cong \GL(V/W)$. $H'$ is the stabilizer of the  
flag $W'>W'\cap U_1>\dots >W'\cap U_{k-1}\geq 0$. Note that the oracle 
restricted to $G'$ realizes a hiding function for $H'$. 

The recursive call then returns a flag in $W'$ as $W'>U_1'>U_2'>\dots 
>U_{k'}>0$. Let $U_i=\langle U_i'\cup W\rangle$, 
$i\in[k']$. If $W=T$, then the algorithm outputs the flag 
$V>U_1>U_2>\dots > U_{k'} > W>0$. If $W<T$, return $V>U_1>U_2>\dots >U_{k'}>0$. 

It is clear that at most $n$ recursive calls will be made, and the algorithm 
runs in polynomial time given that the two 
subroutines run in polynomial time too. 
We now prove Proposition~\ref{prop:guess_part} and \ref{prop:verify} in the 
next two subsections.



\subsection{Guessing a part of the flag}\label{subsec:base_case}

In this subsection we prove Proposition~\ref{prop:guess_part}. 
Recall that $G=\GL_n(\F_q)$, the hidden subgroup ${ H}$ stabilizing of the flag
$V>U_1>\ldots>U_{k-1}>\zerovec$, and $T=U_{k-1}$. 
The algorithm of \cite{dmr10} for finding hidden Borel
subgroups in 2 by 2 matrix groups was based
on computing the intersection with the stabilizer of
a nonzero vector. Here we follow an extension of the idea 
to arbitrary dimension $n$. We consider the {\em common} stabilizer of $n-1$ 
linearly
independent vectors.

Pick a random subspace $U'\leq V$ of dimension $n-1$. Recall that $G_{U'}$ 
denotes the group of pointwise stabilizers of $U'$.
We also consider the group consisting of
the unipotent elements of $ G_{U'}$,
${ N}=\{X\in \GL(V):(X-I)V\leq U'\mbox{~and~} X\in G_{U'}\}.$
Note that ${ N}$ is an abelian normal subgroup of $ G_{U'}$
of size $q^{n-1}$. Here we illustrate the form of ${ G_{U'}}$ and ${ N}$ 
when $U'$ is put in an appropriate basis:
{\footnotesize
$$
\begin{array}{cc}
\begin{pmatrix}
1 & & & & * \\
& 1 & & & * \\
& & 1 & & * \\
& & & 1 & * \\
 & & & & * 
\end{pmatrix}
, &
\begin{pmatrix}
1 & & & & * \\
& 1 & & & * \\
& & 1 & & * \\
& & & 1 & * \\
 & & & & 1 
\end{pmatrix}.
\\
{ G_{U'}}
&
{ N}
\end{array}
$$}
We will describe three procedures, whose success on producing some $0<W\leq 
T$ depend on $d:=\dim(T)$ 
and the field size $q$. Each of these procedures only works for a certain range 
of $d$ and $q$, but together they cover all possible cases. Thus, the algorithm 
needs to run each of these procedures, and return the three results from them.
The general idea behind these procedures is to examine the intersection of the 
random hyperplane $U'$ 
with $T$. As $d=\dim(T)$, the probability that $U'$ contains $T$
is $\frac{q^{n-d}-1}{q^n-1}\sim \frac{1}{q^d}$. 

Assume first that $U'$ does not contain $T$.
We claim that in this case 
\begin{equation}\label{eq:H_cap_G}
\sum_{X\in { H}\cap { G_{U'}}}(X-I)V=T
\end{equation}
\begin{equation}\label{eq:H_cap_N}
\text{and} \quad \sum_{X\in { H}\cap { N}}(X-I)V=U'\cap T.
\end{equation}
To see this, pick
$v_n\in T\setminus U'$, and let
 $v_1,\ldots,v_{n-1}$ be a basis for $U'$ such that
for every $0<j<k$, the system $v_{n-\dim(U_j)+1},\ldots,v_{n-\dim(U_{j+1})}$
is a basis for $U_j$. In the basis $v_1,\ldots,v_n$, the matrices
of the elements of ${ N}$ are the matrices with ones in the 
diagonal,
arbitrary elements in the last column except the lowest one,
and zero elsewhere. Among these the matrices of the elements of 
intersection with $ H$  are those whose first $n-d$ entries
in the last column are also zero: {\footnotesize
$$
\begin{array}{ccc}
\begin{pmatrix}
* & & & &  \\
* & * & & &  \\
* & * & * & &  \\
* & * & * & * & * \\
* & * & * & * & * 
\end{pmatrix}
, &
\begin{pmatrix}
1 & & & &  \\
& 1 & & &  \\
& & 1 & &  \\
& & & 1 & * \\
 & & & & * 
\end{pmatrix}
, &
\begin{pmatrix}
1 & & & &  \\
& 1 & & &  \\
& & 1 & &  \\
& & & 1 & * \\
 & & & & 1 
\end{pmatrix}.
\\
{ H}
&
{ H} \cap { G_{U'}}
&
{ H} \cap { N}
\end{array}
$$}

Based on the above analysis, the three procedures are as follows.

\begin{compactitem}
\item If $d>1$, then ${ H}\cap { N}$ is nontrivial. As ${ N}$ is 
abelian, we can efficiently compute ${ H}\cap { N}$ by the abelian 
hidden subgroup algorithm. Thus by Equation~\ref{eq:H_cap_N}, we can use it to 
compute $W_1$ as a guess for  a nontrivial subspace of $T$. 
\item If $d=1$ and $q\geq n$, we can
compute ${ H}\cap { G_{U'}}$ in $G_{U'}$ by the algorithm in 
Proposition~\ref{prop:tool},
and use it to compute $W_2$ as a guess for $T$ by 
Equation~\ref{eq:H_cap_G}.

\item If $d=1$ and $q<n$, with probability at 
least
$\frac{1}{q}-\frac{1}{q^2}=\Omega(\frac{1}{q})=\Omega(\frac{1}{n})$,
 we have that $U'\geq T$ but
$U'$ does not contain $U_{k-2}$. Then we have
\begin{equation}\label{eq:last_case}
\sum_{X\in { H}\cap { N}}(X-I)V=U'\cap U_{k-2}.
\end{equation}
To see this, pick $v_n\in U_{k-1}\setminus \{0\}$,
$v_{n-1}\in U_{k-2}\setminus U'$, and $v_1,\ldots,v_{n-2}$
s.t. $v_1,\ldots,v_{n-2},v_n$ is a basis for $U'$ and
for every $0<j<k$, the system $v_{n-\dim(U_j)+1},\ldots,v_{n-\dim(U_{j+1})}$
is a basis for $U_j$. In this basis the matrices for the
elements of ${ N}\cap { H}$ are those whose entries
are zero except the ones in the diagonal and except
the other lowest $\dim U_{k-2}$ entries in the next to
last column:{\footnotesize
$$
\begin{array}{ccc}
\begin{pmatrix}
* & & & &  \\
* & * & & &  \\
* & * & * & * &  \\
* & * & * & * &  \\
* & * & * & * & * 
\end{pmatrix}
, &
\begin{pmatrix}
1 & & & * & \\
& 1 & & * & \\
& & 1 & * & \\
& & & 1 & \\
& & & * & 1 
\end{pmatrix}
, &
\begin{pmatrix}
1 & & & & \\
& 1 & & & \\
& & 1 & * & \\
& & & 1 & \\
& & & * & 1 
\end{pmatrix}.
\\
{ H}
&
{ N} 
&
{ H} \cap { N}
\end{array}
$$}
Again, we can find ${ H}\cap { N}$
by the abelian hidden subgroup algorithm 
and use Equation~\ref{eq:last_case} to compute
$V'=U'\cap U_{k-2}$. If $ \dim V'=1$ then return $W_3=V'$ as the guess for 
$T$. Otherwise we take
a direct complement $V''$ of $V'$ and restrict the HSP
to the subgroup of the transformations $X$ such that $(X-I)V''=0$ and
$XV'\leq V'$ (which is isomorphic to $\GL(V')$) and apply the method in 
Proposition~\ref{prop:max}
to compute a subspace $W_3$ as the guess  for $T$.
\end{compactitem}

\subsection{Checking and recursion}\label{subsec:recursion}
In this subsection we prove Proposition~\ref{prop:verify}. Recall that the goal 
is to determine whether some subspace $0<W\leq V$ is contained in $T=U_{k-1}$,
the last 
member of the flag $V>U_1>\dots>U_{k-1}>0$ stabilized by the hidden parabolic 
subgroup $H$. If $W\leq V$, we'd like to know whether $W=T$. This can be 
achieved with the help of the following lemma.

\begin{lemma}\label{prop:check}
Let ${ H}$ be the stabilizer in $\GL(V)$ of the flag
$V>U_1>U_2>\ldots>U_{k-1}>\zerovec$, and let $0<W<V$. 
Let $W'$ be any direct complement of $W$ in $V$.
Then
$U_{k-1}\geq W$ if and only if 
${ H}\geq \{X\in \GL(V):(X-I)V\leq W\}.$
Furthermore, if $U_{k-1}\geq W$ then
$U_{k-1}=W$ if and only if
$${ H}\cap \{X\in \GL(V):(X-I)V\leq W'\mbox{~and~}(X-I)W'=0\}=\{I\}.$$
\end{lemma}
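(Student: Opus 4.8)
The plan is to establish two equivalences separately. For the first one, I would prove both implications directly in terms of the flag. Suppose $U_{k-1}\geq W$. If $X\in\GL(V)$ satisfies $(X-I)V\leq W$, then for each $i$, $(X-I)U_i\leq (X-I)V\leq W\leq U_{k-1}\leq U_i$, so $XU_i\leq U_i$; since $X$ is invertible this forces $XU_i=U_i$, hence $X$ stabilizes the flag and $X\in H$. This gives the ``only if'' direction of the first equivalence (in the form: $U_{k-1}\geq W$ implies $H\supseteq\{X:(X-I)V\leq W\}$). Conversely, suppose $H\supseteq\{X:(X-I)V\leq W\}$. I want to show $W\leq U_{k-1}$. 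Pick any nonzero $w\in W$; I need $w\in U_{k-1}$. If not, then since $U_{k-1}$ is the smallest member and the flag is a genuine chain, there is some index $j$ with $w\notin U_j$ but I can choose a vector configuration exhibiting a transvection-type element that moves $U_j$: more precisely, pick $u\in U_j\setminus\{0\}$ and consider the map $X$ with $Xu=u+w$ and $X$ fixing a complement of $\langle u\rangle$; this $X$ satisfies $(X-I)V\leq\langle w\rangle\leq W$, so $X\in H$, but $XU_j\ni u+w\notin U_j$ (as $u\in U_j$, $w\notin U_j$), contradicting $X\in H$. Hence every $w\in W$ lies in $U_{k-1}$, i.e. $W\leq U_{k-1}$.

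For the second equivalence, assume $U_{k-1}\geq W$ and write $K=\{X\in\GL(V):(X-I)V\leq W'\text{ and }(X-I)W'=0\}$; note $K$ consists exactly of the maps acting as identity on $W'$ and sending $W$ into $W'$, so $K\cong\mathrm{Hom}(W,W')$ as an abelian group, and a nonidentity element of $K$ is determined by a nonzero homomorphism $\varphi:W\to W'$ acting as $x+y\mapsto x+y+\varphi(y)$ for $x\in W'$, $y\in W$. First suppose $U_{k-1}=W$. Take $X\in K\cap H$ corresponding to $\varphi$. Since $X\in H$, $XU_{k-1}=U_{k-1}=W$; but for $y\in W=U_{k-1}$ we have $Xy=y+\varphi(y)$, and this lies in $W$ iff $\varphi(y)\in W\cap W'=0$. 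Hence $\varphi=0$ and $X=I$, so $K\cap H=\{I\}$. Conversely, suppose $W<U_{k-1}$ (strict). Then I can build a nonidentity element of $K\cap H$: pick $0\neq w\in U_{k-1}$ and $0\neq w'\in W'$; I want a homomorphism $\varphi:W\to W'$ with $\varphi(W)\subseteq W'$ such that the resulting $X$ stabilizes every $U_i$. The point is to choose $\varphi$ supported appropriately: take $\varphi$ with image inside $W'\cap U_{k-1}$ — wait, that may be zero, so instead I take image in a well-chosen line. Let me choose $0\neq w\in W$ and note $w\in U_{k-1}\subseteq U_i$ for all $i$; pick $w'\in W'\cap U_{k-1}$ if nonzero, else observe $W\subsetneq U_{k-1}$ means $W'\cap U_{k-1}\neq 0$ (dimension count: $\dim(W'\cap U_{k-1})=\dim W'+\dim U_{k-1}-\dim(W'+U_{k-1})\geq (n-\dim W)+\dim U_{k-1}-n=\dim U_{k-1}-\dim W>0$), so pick $0\neq w'\in W'\cap U_{k-1}$, let $\lambda:W\to\F_q$ be a nonzero linear functional, and set $\varphi(y)=\lambda(y)w'$. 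Then $X=I+\varphi$ (extended by identity on $W'$) has $(X-I)V\leq\langle w'\rangle\leq U_{k-1}\leq U_i$ for all $i$, hence stabilizes each $U_i$ exactly as in the first part, so $X\in H$; and $X\in K$ by construction, $X\neq I$. Thus $K\cap H\neq\{I\}$, completing the contrapositive.

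The main obstacle is the reverse direction of the second equivalence: one must actually exhibit a nontrivial element of $K\cap H$ when $W\subsetneq U_{k-1}$, and the natural candidate $\varphi$ must be chosen so that its image lands in a subspace contained in every flag member (so that $I+\varphi$ stabilizes the flag) while also being a genuine map $W\to W'$ (not into $W$). The dimension-count argument showing $W'\cap U_{k-1}\neq 0$ when $W$ is a proper subspace of $U_{k-1}$ is exactly what makes this work, and I would present that calculation carefully. Everything else — the transvection constructions and the ``invertible + stabilizes a subspace as a submap implies stabilizes it'' step — is routine and follows the pattern already used in Proposition~\ref{prop:max} and Section~\ref{sec:tool}.
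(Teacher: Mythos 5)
Your proof is correct and follows essentially the same route as the paper: the easy inclusion by direct verification, an explicit element of $\{X:(X-I)V\leq W\}\setminus H$ when $W\not\leq U_{k-1}$ (you use a transvection where the paper swaps two basis vectors — equivalent), the observation that $W'\cap U_{k-1}\neq 0$ when $W<U_{k-1}$ to build a nontrivial element of the intersection, and the decomposition $v=w+w'$ to force $Xv-v\in W\cap W'=0$ when $W=U_{k-1}$. No gaps worth noting.
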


It is clear that this allows us 
to determine whether 
$U_{k-1}\geq W$: form a generating set of $\{X\in \GL(V):(X-I)V\leq W\}$, 
and query the oracle to see whether all element in the generating set 
evaluate the same. 
Also, 
if $U_{k-1}\geq W$, we can test whether 
$U_{k-1}=W$ by solving an instance of the abelian HSP.

Let us present an intuitive interpretation of this lemma. Consider  
a basis of 
$V$ consisting of a basis of $W'$, 
followed by a basis of $W$. Then the subgroup mentioned in the first part is 
the group of invertible matrices of the form $Y+I$, where the 
first $d=\dim W'$ rows of $Y$ are zero. 
The group of the second part 
consists of the matrices of the form $I+Y$ where
only the upper right $d\times (n-d)$ block of $Y$ can contain nonzero entries.
This is an abelian group.

\begin{proof}
Let $\cL=\{X\in \GL(V):(X-I)V\leq W\}$.
To see that $U_{k-1}\geq W \Rightarrow \cH\geq \cL$, we show that every 
$X\in\cL$ stabilizes the flag. For $i\in\{1, \dots, k-1\}$, and $v\in 
U_i$, 
$(X-I)v\in W\leq U_{k-1}\leq U_i$. Thus $Xv\in U_i$, and $X$ stabilizes 
the 
flag. We prove the other direction $\cH\geq \cL \Rightarrow 
U_{k-1}\geq W$ by contradiction. That is, if $U_{k-1}\not\geq W$, then we 
exhibit
some $X\in \cL\setminus \cH$. For this, choose some nonzero $b\in U_{k-1}$ and 
$c'\in 
U\setminus U_{k-1}$, and form $c=b+c'\not\in U_{k-1}$. Fix a basis of 
$V$ as 
$\{b, c, d_1, \dots\}$. Now consider the linear map $X$ s.t. $X$ switches $b$ 
and $c$, 
and leaves $d_i$'s fixed. It is straightforward to verify that 
$X\in\cL$ and 
$X\not\in \cH$.


For the furthermore part, 
we set $\cL'=\{X\in \GL(V):(X-I)V\leq 
W'\mbox{~and~}(X-I)W'=0\}$. 
To see the if direction, assume that $U_{k-1}>W$. 
Then $U_{k-1}\cap W'\neq \zerovec$ and 
$$\{X\in\GL(V):(X-I)W'=\zerovec\mbox{~and~}(X-I)W\leq U_{k-1}\cap W'\}$$
is a nontrivial subgroup of 
$\cH\cap \cL'$.
For the only if direction, assume that $W=U_{k-1}$ and that
$X\in \cH\cap \cL'$. 
For any $v\in V$, $Xv-v\in W'$ by $X\in\cL'$. We show that $Xv-v\in W$ as well. 
For this, write
$v=w+w'$ where $w\in W=U_{k-1}$ and $w'\in W'$, thus 
$Xv-v=X(w+w')-(w+w')=Xw-w\in 
U_{k-1}=W$ by $X\in \cH\cap \cL'$. This shows that for any $v\in V$, $Xv-v\in 
W\cap W'=\zerovec$, so $X=I$.
%
%
%
%
\end{proof}
\begin{remark}\label{remark:dual}
Instead of considering whether a subspace $W$ is contained in $U_{k-1}$, we can 
also decide whether $W$ contains $U_1$ as follows. Let us consider the same 
hypotheses as 
in Lemma~\ref{prop:check}. Then $U_{1}\leq W$ if 
and only if 
${ H}\geq \{X\in \GL(V):(X-I)W=\zerovec\}.$
Furthermore, if $U_{1}\leq W$ then
$U_{1}=W$ if and only if
$${ H}\cap \{X\in \GL(V):(X-I)V\leq W'\mbox{~and~}(X-I)W'=0\}=\{I\}.$$
With the help of the above claim, there is another 
possible recursion scheme: if 
$V>W\geq U_1$ is found then take any direct complement $W'$ 
of $W$ in $V$ and recurse with the smaller ambient group 
$\{X\in \GL(V):XW\leq W\mbox{~and~}(X-I)W'=\zerovec\}$, which is
isomorphic to $\GL(W)$.
\end{remark}

\section{Finding hidden full unipotent groups}\label{subsec:unipotent}

A full unipotent group in $\GL(V)$ is the subgroup
$${ H}=\{X\in\GL(V):(X-I)U_j\leq U_{j+1}\;(j=0,\ldots,n-1)\}$$
for some complete flag $V=U_0>U_1>\ldots>U_{n-1}>U_n=\zerovec$
of subspaces. The full unipotent groups are the $p$-Sylow 
subgroups of $\GL(V)$ 
(recall that $q$ is a power of the prime $p$)
and finding generators for one of them
is equivalent to finding the corresponding flag. We can use 
a variant of the
method described in Section~\ref{subsec:base_case} to find $U_{n-1}$ in time
$(q+n)^{O(1)}$. 

We pick a random subspace $W'$ of dimension $n-1$.
Put 
$${ N}=\{X\in \GL(V):(X-I)V\leq W'\mbox{~and~}(X-I)W'=0\}.$$
With probability $\Omega(\frac{1}{q})$, we have $W'\cap U_{n-2}=U_{n-1}$. 
If this is the case then 
$$\sum_{X\in { N}\cap { H}}(X-I)V=U_{n-1}.$$
To see this, pick $v_n\in U_{n-1}\setminus \{0\}$,
$v_{n-1}\in U_{n-2}\setminus U_{n-1}$,
and $v_j\in W'\cap U_{j-1}\setminus U_{j}$, for $j=1,\ldots,n-2$.
In this basis the matrices for the
elements of ${ N}\cap { H}$ are those whose entries
are zero except the ones in the diagonal and except
the lowest  entry in the next to last column: 
$$
\begin{array}{ccc}
\begin{pmatrix}
1 & & & &  \\
* & 1 & & &  \\
* & * & 1 &  &  \\
* & * & * & 1 &  \\
* & * & * & * & 1 
\end{pmatrix}
, &
\begin{pmatrix}
1 & & & * & \\
& 1 & & * & \\
& & 1 & * & \\
& & & 1 & \\
& & & * & 1 
\end{pmatrix}
, &
\begin{pmatrix}
1 & & & & \\
& 1 & & & \\
& & 1 &  & \\
& & & 1 & \\
& & & * & 1 
\end{pmatrix}.
\\
{ H}
&
{ N} 
&
{ N} \cap { H}
\end{array}
$$
Therefore we can use the abelian hidden subgroup algorithm
for finding ${ H} \cap { N}$ and use it to
compute a guess for $U_{n-1}$. We can test a guess for
$U_{n-1}$ by testing a set of generators of the group
$$\{X\in \GL(V):(X-I)V\leq U_{n-1}\mbox{~and~}(X-I)U_{n-1}=\zerovec\}$$
for membership in $ H$ and the restrict the hiding function
to the subgroup
$$\{X\in \GL(V):(X-I)U_{n-1}=\zerovec\}\cong\GL_{n-1}(\F_q)$$
in order to find the other members of the flag by recursion.
The complexity of the procedure is $(q+n)^{O(1)}$.

\section{Maximal and minimal
parabolic subgroups: classical algorithms}\label{sec:classical}
In this section, we consider the following HSP: the ambient group 
$G=\GL_n(\F_q)$, and for some integer $0<d<n$, the family of hidden subgroups 
is 
$\cH=\{G_{\{U\}} : U\leq 
V, 
\dim(U)=d\}$, that is those subgroups setwise stabilizing $d$-dimensional 
subspaces. We assume that $d$ is given. 
\subsection{A simple deterministic algorithm}

Here is a simple deterministic algorithm for finding $U$: try every 
hyperplane in $V$ until we obtain a hyperplane $W\geq U$. Once such $W$ is 
obtained we recurse as described in Remark~\ref{remark:dual}. It is also 
described in Remark~\ref{remark:dual} how to check with the oracle
whether $W\geq U$. 
The cost is polynomial in the number of hyperplanes $q^n-1$, which is 
sub-exponential in $n^2\log q$ when $n$ is reasonably large.



\subsection{An almost tight lower bound}
We now present a lower bound 
for the query complexity of a randomized algorithm for this HSP. 
First we present the lower bound for the deterministic case 
based on an adversary strategy. Then we argue that a 
minor adaptation of this works
for the randomized case. We will suppose w.l.o.g that $d\leq n/2$, as otherwise 
we can replace with $d$ by $n-d$. 

\subsubsection{Deterministic query complexity}
\begin{proposition}
Let $G=\GL_n(\F_q)$, and $\cH=\{G_{\{U\}} : U\leq V, \dim(U)=d\}$. Any 
deterministic algorithm for $\hsp(G, \cH)$ must make $\Omega(q^{d/2})$ queries.
\end{proposition}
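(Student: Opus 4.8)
The statement to prove is a lower bound: any deterministic algorithm for $\hsp(G,\cH)$, where $\cH$ is the family of setwise stabilizers of $d$-dimensional subspaces in $\GL_n(\F_q)$ with $d \le n/2$, must make $\Omega(q^{d/2})$ queries. Since it's a deterministic algorithm against an oracle, the natural approach is an adversary argument: I describe an adversary that answers the algorithm's queries without committing to a particular hidden subgroup, and show that after fewer than $c \cdot q^{d/2}$ queries, at least two distinct subgroups in $\cH$ remain consistent with all answers given, so the algorithm cannot reliably output the correct one.

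Let me think about what a query looks like. The algorithm queries $f$ at a matrix $X \in \GL_n(\F_q)$ (more generally at several matrices, but a deterministic adaptive algorithm makes one query at a time). The hiding function $f$ for hidden subgroup $H = G_{\{U\}}$ is the map sending $X$ to its left coset $XH$; equivalently, the only information in the query is: for the queried pairs $X, Y$, whether $X^{-1}Y \in H$, i.e. whether $X$ and $Y$ lie in the same coset. So the adversary maintains a set $\mathcal{S}$ of $d$-dimensional subspaces still "alive" (consistent), and answers each new query consistently.
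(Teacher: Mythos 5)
Your setup is the right one --- the paper's own proof is exactly this adversary argument, and you correctly identify that the only information a query reveals is whether $X^{-1}Y$ lies in the hidden subgroup for queried pairs $X,Y$. But the proposal stops precisely where the proof has to begin. You never establish the quantitative fact that makes the adversary survive for $\Omega(q^{d/2})$ queries, namely: after $N$ queries the algorithm has learned, for at most $\binom{N}{2}$ non-scalar quotients $g_ig_j^{-1}$, that each such quotient does \emph{not} stabilize the hidden $d$-dimensional subspace; the adversary can keep going as long as at least two $d$-dimensional subspaces are stabilized by none of these quotients. To conclude, you must bound the number of $d$-dimensional subspaces that a single non-scalar $A\in\GL_n(\F_q)$ can stabilize. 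The paper proves by induction on $n$ (with a case analysis on whether $A$ acts as a scalar on some hyperplane) that this number is at most ${\binom{n-1}{d}}_q+{\binom{n-1}{d-1}}_q$. Since ${\binom{n}{d}}_q=q^d{\binom{n-1}{d}}_q+{\binom{n-1}{d-1}}_q$ and ${\binom{n-1}{d-1}}_q\leq{\binom{n-1}{d}}_q$ for $d\leq n/2$, each quotient kills at most roughly a $q^{-d}$ fraction of all ${\binom{n}{d}}_q$ candidate subspaces, so $\binom{N}{2}$ quotients cannot exhaust them unless $N=\Omega(q^{d/2})$.

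This counting lemma is the entire substance of the proof, and it is not routine: the naive bound (a fixed matrix stabilizes ``few'' subspaces) is false for scalar matrices, which is why the adversary must only ever be charged for \emph{non-scalar} quotients, and why the bound has to be proved carefully for every non-scalar $A$, including those very close to scalar (e.g.\ a single Jordan block perturbation, which is Case II(ii) in the paper's induction). Without this lemma, your argument proves nothing beyond ``an adversary exists in principle.'' As written, the proposal is an outline of the strategy rather than a proof.
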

\begin{proof}
Suppose that the deterministic algorithm queries
the oracle for $N$ group elements. The strategy of the adversary 
is simply to return different values (different labels of cosets) for these 
elements until it becomes impossible.
That is, as long as there still exists a $d$-dimensional subspace which is 
{\em not} stabilized
by any of the non-scalar quotients of pairs of the queried matrices.
In other words, if during the execution of algorithm, the queries are $g_1, ... 
, g_N,$ where assuming without loss of generality that all $g_i$'s are 
distinct, the adversary returns labels $1, 2, ... , N$.
Note that the answer to the $t^{th}$ query adds at most $t$ new pairs of 
quotients of $g_i$s.
Hence there are at most $\binom{N}{2}$ such nontrivial quotients $g_i g_j^{-1}.$
If $g$ is one of the quotients, then the algorithm learns that $g$ does not 
stabilize the hidden subspace.
In order to continue the adversary strategy, all we need to make sure is that 
the quotients generated so far
do not stabilize all the hidden subspaces of dimension $d,$ i.e., there are 
still two $d$-dimensional subspaces
which are not stabilized by any of the quotients. In this case the algorithm 
can not answer correctly on any of these two subspaces.
Thus it suffices to upper bound the number of subspaces stabilized
by an individual group element.

For $a, b\in \N$, $b\leq a$,
let ${\binom{a}{b}}_q$ be the Gaussian binomial 
coefficient, which counts 
the number of $b$-dimensional subspaces of $\F_q^a$. If 
$b> a$
then set 
${\binom{a}{b}}_q=0$. The analogue of the Pascal equality for binomial 
coefficients is
${\binom{a}{b}}_q=q^b{\binom{a-1}{b}}_q+{\binom{a-1}{b-1}}_q$. It is also 
easily deduced that 
${\binom{a}{b}}_q=\frac{q^a-1}{q^{a-b}-1}{\binom{a-1}{b}}_q$.
\begin{claim}
For $A\in \GL_n(\F_q)$, if $A\neq \lambda I$, $\lambda \in\F_q^{\times}$, then 
$A$ can stabilize at most ${\binom{n-1}{d}}_q+{\binom{n-1}{d-1}}_q$
$d$-dimensional subspaces.
\end{claim}
\begin{proof}
We prove by induction on $n$. When $n=d$, this can be verified easily. 
Suppose the claim holds for $d\leq n< k$. Then for $n=k$, we distinguish the 
following cases. 

\noindent{\it Case I.} Suppose there does not exist a hyperplane $P\leq V$, 
s.t. $A$ acts as a scalar matrix on $P$. Then by induction hypothesis, for any 
hyperplane $P\leq V$, the restriction of $A$ on $P$ stabilizes at most  
$({\binom{k-2}{d}}_q+{\binom{k-2}{d-1}}_q)$ $d$-dimensional subspaces. Thus the 
number of $d$-dimensional subspaces that $A$ stabilizes is at most 
$\frac{q^k-1}{q^{k-d}-1}\cdot ({\binom{k-2}{d}}_q+{\binom{k-2}{d-1}}_q)
\leq \frac{q^k-q}{q^{k-d}-q}\cdot {\binom{k-2}{d}}_q + 
\frac{q^{k-1}-1}{q^{k-d}-1} \cdot {\binom{k-2}{d-1}}_q
\leq {\binom{k-1}{d}}_q+{\binom{k-1}{d-1}}_q$. 

\noindent{\it Case II.} Suppose $A$ acts on some hyperplane $P\leq V$ as 
$\lambda I$. If $A$ only stabilizes $d$-dimensional subspaces in $P$, $A$ 
stabilizes at most ${\binom{n-1}{d}}_q$ $d$-dimensional subspaces. Otherwise, 
suppose $A$ stabilizes a subspace $U\leq V$, $\dim(U)=d$, and 
$U\not\leq P$. Take $v\in U\setminus P$, and suppose $Av=\mu v+ w$ for $\mu\in 
\F_q^{\times}$, and $w\in P\cap U$. We now consider the following cases.

\noindent{\it Case II (i).} If $\mu\neq \lambda$, let $\gamma=1/(\mu-\lambda)$. 
Then $A(v+\gamma w)=\mu(v+\gamma w)$. Form a basis of $V$ as $(b_1, \dots, 
b_{k-1}, v+\gamma w)$, where $b_1, \dots, b_{k-1}$ is a basis for $P$. Then 
w.r.t. this basis $A$ is $\mathrm{diag}(\lambda, \dots, \lambda, \mu)$. The 
number of subspaces stabilized by $\mathrm{diag}(\lambda, \dots, \lambda, \mu)$ 
is clearly ${\binom{k-1}{d}}_q+{\binom{k-1}{d-1}}_q$. 

\noindent{\it Case II (ii).} If $\mu = \lambda$, first note that $w\neq 0$, 
since $A\neq \lambda I$. Now 
consider a basis of $V$ 
as $(v, w, b_1, \dots, b_{k-2})$, where $(w, b_1, \dots, b_{k-2})$ is a basis 
of $P$. W.r.t this basis,
 $A$ is of the form 
$$
\begin{pmatrix}
\lambda & & & & \\
1 & \lambda & & & \\
& & \lambda &  & \\
& & & \ddots & \\
& & &  & \lambda
\end{pmatrix}.
$$
Then for any $d$-dimensional $U$ s.t. $A(U)=U$ and $U\not\leq P$, it is easy to 
check that $U$ must contain $w$, so the number of such $U$ is at most 
${\binom{k-1}{d-1}}_q$, which the number of $d$-dimensional subspaces 
containing $w$. This concludes this case.
\end{proof}
Therefore the quotients can hit less than 
$N^2({\binom{n-1}{d}}_q+{\binom{n-1}{d-1}}_q)$ subspaces.
As ${\binom{n}{d}}_q=q^d{\binom{n-1}{d}}_q+{\binom{n-1}{d-1}}_q$
and ${\binom{n-1}{d-1}}_q\leq {\binom{n-1}{d}}_q$ for $d\leq n/2$,
these do not give all the $d$-dimensional subspaces unless $N=\Omega(q^{d/2})$
when $d\leq n/2$. In other words, this yields a lower bound of 
$\Omega(q^{d/2})$ for the deterministic query complexity. 
\end{proof}
Note that when $d$ is 
around $\frac{n}{2}$, this lower bound gives $\Omega(q^{n/4})$, as compared 
to the $q^{O(n)}$ upper bound shown in the last subsection. 
\subsubsection{Randomized query complexity}
Recall that any randomized query algorithm $R$ for HSP that has success 
probability say $\epsilon$ 
can be viewed as a probability distribution over several deterministic 
query algorithms, i.e., the deterministic algorithm $D_i$ is used with 
probability $p_i,$ where $i \in \{1, \ldots, k \}$ for some $k.$ 

On every input, i.e., on every $d$ dimensional subspace $V$ the probability 
that the $D_i$ outputs correctly on $V$
is at least $1 - \epsilon.$ By a simple double-counting argument (cf.  Yao's 
min-max principle applied to uniform distribution on inputs), there is a 
$j \in \{1, \ldots, k\}$ such that the deterministic algorithm $D = D_j$ 
outputs correctly on at least $1 - \epsilon$ fraction of the inputs.
Thus in order to prove lower bound for randomized case, 
it suffices to show a lower bound for the deterministic algorithm that computes 
correctly on $1 - \epsilon$ fraction of the inputs.
\begin{claim}\label{claim:queriesneeded} 
Let $G=\GL_n(\F_q)$, $d\leq n/2$, and $\cH=\{G_{\{U\}} : U\leq V, \dim(U)=d\}$. Let $D$ be a 
deterministic algorithm for $\hsp(G,\cH)$ that answers 
correctly on at least $1 - \epsilon$ fraction (for a constant $\epsilon \geq 0$)
of the $d$-dimensional
subspaces. Then $D$ must make at least $\Omega(q^{d/2})$ queries in worst case.
\end{claim}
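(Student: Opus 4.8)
The plan is to reuse the adversary from the deterministic lower bound almost verbatim, changing only the final counting step so that it tolerates an $\epsilon$-fraction of errors. First I would fix a deterministic $D$ that is correct on at least a $(1-\epsilon)$-fraction of the $d$-dimensional subspaces and let $N$ be its worst-case query count. Run $D$ against the adversary that answers pairwise-distinct queries with pairwise-distinct coset labels for as long as this is consistent with some $d$-dimensional hidden subspace; as in the deterministic argument we may assume that no two queried matrices are scalar multiples of one another, so all quotients $g_ig_j^{-1}$ arising from the $m\le N$ queries are non-scalar. If the adversary is ever forced to repeat a label, then the at most $\binom{N}{2}$ quotients already stabilize every $d$-dimensional subspace, and the Claim above together with $\binom{n-1}{d}_q+\binom{n-1}{d-1}_q\le\frac{2}{q^d}\binom{n}{d}_q$ (from the Pascal identity $\binom{n}{d}_q=q^d\binom{n-1}{d}_q+\binom{n-1}{d-1}_q$ and $\binom{n-1}{d-1}_q\le\binom{n-1}{d}_q$, valid for $d\le n/2$) already forces $N=\Omega(q^{d/2})$; so we may assume the adversary survives all $m$ queries, after which $D$ halts with some guess $U^{\ast}$.

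Next I would set $S$ to be the collection of $d$-dimensional subspaces not stabilized by any of the quotients $g_ig_j^{-1}$; by the Claim, $|S|\ge\binom{n}{d}_q-\binom{N}{2}\bigl(\binom{n-1}{d}_q+\binom{n-1}{d-1}_q\bigr)$. The crux is that for every $U\in S$ the genuine oracle $f_U$ hiding $G_{\{U\}}$ is consistent with the adversary's answers: since no quotient $g_ig_j^{-1}$ stabilizes $U$, the values $f_U(g_1),\ldots,f_U(g_m)$ are pairwise distinct, and an HSP algorithm can only use its oracle answers through equality tests (the coset labels themselves carry no information), so $D$ cannot distinguish $f_U$ from the adversary. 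Hence on every input $U\in S$ the run of $D$ is identical — same queries, same output $U^{\ast}$ — so $D$ is correct on at most one subspace in $S$, namely $U^{\ast}$ (and only if $U^{\ast}\in S$). Consequently the number of $d$-dimensional subspaces on which $D$ is correct is at most $\bigl(\binom{n}{d}_q-|S|\bigr)+1\le\binom{N}{2}\bigl(\binom{n-1}{d}_q+\binom{n-1}{d-1}_q\bigr)+1$.

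Finally I would combine this with the hypothesis that $D$ is correct on at least $(1-\epsilon)\binom{n}{d}_q$ subspaces to obtain $\binom{N}{2}\bigl(\binom{n-1}{d}_q+\binom{n-1}{d-1}_q\bigr)\ge(1-\epsilon)\binom{n}{d}_q-1$, then substitute $\binom{n-1}{d}_q+\binom{n-1}{d-1}_q\le\frac{2}{q^d}\binom{n}{d}_q$ and divide by $\binom{n}{d}_q$ to get $\binom{N}{2}\ge\frac{q^d}{2}\bigl((1-\epsilon)-o(1)\bigr)$, i.e. $N=\Omega\bigl(\sqrt{1-\epsilon}\cdot q^{d/2}\bigr)$, which is $\Omega(q^{d/2})$ for constant $\epsilon<1$ (the statement being vacuous otherwise). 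The main obstacle I expect is not the arithmetic but pinning down the consistency step of the second paragraph cleanly: one has to argue that a whole family of genuine hiding functions $\{f_U:U\in S\}$, not merely a single adversarial partial function, drives $D$ down one and the same computation path, which is exactly what forces $D$ to be correct on at most one subspace per surviving transcript. Everything else reuses the Gaussian-binomial estimates already established in the deterministic case.
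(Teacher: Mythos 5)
Your proposal is correct and follows essentially the same route as the paper: reuse the distinct-label adversary from the deterministic bound, observe that all hiding functions for "uncovered" subspaces induce the same transcript and hence the same output, so at most one of them can be answered correctly, which forces the $\binom{N}{2}$ quotients to cover a $(1-\epsilon)$-fraction of the $\binom{n}{d}_q$ subspaces and yields $N=\Omega(\sqrt{1-\epsilon}\,q^{d/2})$. Your write-up of the consistency step (choosing, for each uncovered $U$, a relabelled hiding function matching the adversary's answers) is in fact more careful than the paper's own sketch; the only cosmetic point is that pairs of queries that are scalar multiples of one another must receive equal labels, but such pairs convey no information and do not affect the count of non-scalar quotients.
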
 
\begin{proof}
Suppose $D$ makes $N$ queries and computes correctly on at least $1 - \epsilon$ 
fraction of inputs.
First we note that the adversary strategy (described in previous subsection) 
for the deterministic query complexity is non-adaptive, i.e., 
one may assume that the answers to the queries are fixed by the adversary 
beforehand. So we can assume without loss of generality 
that  the adversary  answers distinct coset-labels
$1, 2, \ldots , N$ as long as it can be consistent with its answers.
We use the same adversary strategy for $D$ that was used for the deterministic 
case.
Our lower bound will only be weaker by a multiplicative  $1 - \epsilon$ factor.

To see this, 
note that after $N$ queries $D$ obtains information about at most $N \choose 2$ 
quotient elements. Moreover, since $D$ makes error on at most
$\epsilon$ fraction of inputs, there are at most  $\epsilon$ fraction of the 
$d$-dimensional subspaces {\em uncovered} by the stabilizers
of these $N \choose 2$ quotient elements.
Hence, from the point of view of the adversary, it can  can continue giving 
different labels as answer to the queries as long 
as there are still $\epsilon$ fraction of the $d$-dimensional subspaces still 
left uncovered. Hence we get essentially the same lower bound
with the multiplicative factor of $1 - \epsilon$ on the query complexity of 
$D,$ and hence on query complexity of any randomized algorithm.
\end{proof}

This proves Therom~\ref{thm:main3}. 
As every Borel subgroup is contained in the stabilizer of a unique
$\lfloor n/2\rfloor$-dimensional subspace, 
Claim~\ref{claim:queriesneeded} remains 
valid with $d=\lfloor n/2\rfloor$ if we replace $\cH$ with the class of Borel subgroups, 
proving Corollary~\ref{cor:main4}. 

\section{Concluding remarks}\label{sec:conclude}

We have shown that hidden parabolic subgroups of the general linear group
over a finite field can be found in quantum polynomial time. Efficient 
procedures for finding parabolic subgroups in related groups, that is in 
the (projective) special linear group can be derived using the
techniques described in \cite{iva12} for Borel subgroups. One possible
direction for further research could be determining the complexity of 
$HSP(\GL_n(F_q),\cH)$ for other well known classes $\cH$ of large 
subgroups of $\GL_n(\F_q)$,
e.g., where $\cH$ consists of (certain subclasses) the classical groups.
For full unipotent groups we gave a method polynomial in $n$ and $q$. 
Even for $n=2$ it would be interesting to know whether there is an
algorithm of complexity subexponential in $\log q$ 
(e.g, $2^{O(\sqrt{\log q})}$). 

The instance of the HSP discussed in Section~\ref{sec:tool} 
is a problem in the flavor of the generalized hidden shift problem
introduced in~\cite{cvd07}. In our view, the existence of a quantum algorithm for this 
problem which also works in polynomial time where the base 
field is neither of constant characteristic nor sufficiently large
is an interesting open question. A positive answer would also simplify 
the main algorithm of the present paper.

\paragraph*{Acknowledgements.}
The research is partially funded by the Singapore Ministry of Education and the 
National Research Foundation, also through the Tier 3 Grant ``Random numbers 
from quantum processes,'' MOE2012-T3-1-009. 
Research partially supported by the European Commission
IST STREP project Quantum Algorithms (QALGO) 600700,
by the French ANR Blanc program under contract ANR-12-BS02-005 (RDAM project), 
and by the Hungarian Scientific Research Fund (OTKA), Grant NK105645.

%
%
%
%

\end{document}